\newtheorem{mytheo}{Theorem}
\newtheorem{myrem}{Remark}%
\newtheorem{mylem}{Lemma}%
\renewcommand{\thefootnote}{\fnsymbol{footnote}}
\DeclareMathOperator*{\argmin}{arg\,min}
\def\x{{\mathbf x}}
\begin{document}

\title{Rigid Body Localization Using Sensor Networks: Position and Orientation Estimation}
                \author{Sundeep Prabhakar Chepuri,~\IEEEmembership{Student Member,~IEEE,} \\Geert~Leus,~\IEEEmembership{Fellow,~IEEE,} and Alle-Jan van der Veen,~\IEEEmembership{Fellow,~IEEE}
                
\thanks{This work was supported in part by STW under the FASTCOM project (10551) and in part by NWO-STW under the VICI program (10382).}

\thanks{All the authors are with the Faculty of Electrical Engineering, Mathematics and Computer Science, Delft University of Technology, The Netherlands. 
Email:~\{s.p.chepuri;g.j.t.leus;a.j.vanderveen\}@tudelft.nl.}
\thanks{A conference precursor of this manuscript has been
published in~\cite{ICASSP13chepuri}.}
}
\markboth{IEEE TRANSACTIONS ON SIGNAL PROCESSING (Draft)}
{Chepuri \MakeLowercase{\text \it{et al.}}: Rigid body localization using sensor networks}
\maketitle
\thispagestyle{empty}
\pagenumbering{arabic}
\renewcommand{\thefootnote}{\arabic{footnote}}
\IEEEoverridecommandlockouts
\maketitle
\begin{abstract}
In this paper, we propose a novel framework called rigid body localization for joint position and orientation estimation of a rigid body. We consider a setup in which a few sensors are mounted on a rigid body. The absolute position of the sensors on the rigid body, or the absolute position of the rigid body itself is not known. However, we know how the sensors are mounted on the rigid body, i.e., the sensor topology is known.  Using range-only measurements between the sensors and a few anchors (nodes with known absolute positions), and without using any inertial measurements (e.g., accelerometers), we estimate the position and orientation of the rigid body. For this purpose, the absolute position of the sensors is expressed as an affine function of the Stiefel manifold. In other words, we represent the orientation as a rotation matrix, and absolute position as a translation vector. We propose a least-squares (LS), simplified unitarily constrained LS (SUC-LS), and optimal unitarily constrained least-squares (OUC-LS) estimator, where the latter is based on Newton's method. As a benchmark, we derive a unitarily constrained Cram\'er-Rao bound (UC-CRB). The known topology of the sensors can sometimes be perturbed during fabrication. To take these perturbations into account, a simplified unitarily constrained total-least-squares (SUC-TLS), and an optimal unitarily constrained total-least-squares (OUC-TLS) estimator are also proposed.
\\\\
{\bf EDICS}: SEN-LOCL Source localization in sensor networks, SEN-APPL Applications of sensor networks,  SAM-APPL Applications of sensor and array multichannel processing, SPC-INTF Applications of sensor networks. 
\end{abstract}
\begin{keywords}
Rigid body localization, Stiefel manifold, attitude, sensor networks, unitary constraint, constrained total-least-squares, constrained 
Cram\'er-Rao bounds.
\end{keywords}
\section{Introduction}
\IEEEPARstart{O}{}ver
the past decade, advances in wireless sensor technology have enabled the usage of wireless sensor networks (WSNs) in different areas related to sensing, monitoring, and control~\cite{WSNs}. Wireless sensors are nodes equipped with a radio transceiver and a processor, capable of wireless communications and computational operations. A majority of the applications that use WSNs rely on a fundamental aspect of either associating the location information to the data that is acquired by spatially distributed sensors (e.g., in field estimation), or to identify the location of the sensor itself (e.g., in  security, rescue, logistics). Identifying the sensor's location is a well-studied topic~\cite{localizationSPM,GinnakisLoc,Gusta05SPM}, and it is commonly referred to as \textit{localization}. 

Localization can be either absolute or relative. In absolute localization, the aim is to estimate the absolute position of the sensor(s) using a few reference nodes whose absolute positions are known, commonly referred to as {\it anchors}. Absolute localization problems are typically solved using measurements from certain physical phenomena, e.g., time-of-arrival (TOA), time-difference-of-arrival (TDOA), received signal strength (RSS), or angle-of-arrival (AOA)~\cite{localizationSPM,GinnakisLoc}.  Localization can also be relative, in which case the aim is to estimate the constellation of the sensors or the topology of the WSN, and determining the location of a sensor relative to the other sensors is sufficient. Classical solutions to relative localization are based on multi-dimensional scaling (MDS) using range measurements~\cite{MDS,MDS2,CostasMDS}. There exists a plethora of algorithms based on these two localization paradigms, and they recently gained a lot of interest to facilitate low-power and efficient localization solutions by avoiding global positioning system (GPS) based results and its familiar pitfalls.  

In this paper, we take a step forward from the classical localization, and provide a new and different flavor of localization, called \textit{rigid body localization}. In rigid body localization, we use a few sensors on a rigid body, and exploit the knowledge of the sensor topology to jointly estimate the position as well as the orientation of the rigid body. 
\subsection{Applications}
Rigid body localization has potential applications in a variety of fields. To list a few, it is useful in the areas of underwater (or in-liquid) systems, orbiting satellites, mechatronic systems, unmanned aircrafts, unmanned underwater vehicles, atmospheric flight vehicles, robotic systems, or ground vehicles.  In such applications, classical localization of the node(s) is not sufficient. For example, in an autonomous underwater vehicle (AUV)~\cite{underwaterAV}, or an orbiting satellite~\cite{olfar}, the sensing platform is not only subject to motion but also to rotation. Hence, next to position, determining the orientation of the body also forms a key component, and is essential for controlling, maneuvering, and monitoring purposes.  

The orientation is sometimes referred to as \textit{attitude} (aerospace applications) or \textit{tilt} (for industrial equipments and consumer devices). Traditionally, position and orientation are treated separately even though they are closely related. The orientation of a body is usually measured using inertial measurement units (IMUs) comprising of accelerometers~\cite{accelerometer}, gyroscopes and sometimes used in combination with GPS~\cite{GPSattitude}. However, IMUs generally suffer from accumulated errors often referred to as drift errors. Apart from IMUs, sensors like sun-trackers are sometimes used in satellites to measure orientation. 

On the other hand, in the presented rigid body localization approach we propose to use the communication packets containing the ranging information, just as in traditional localization schemes~\cite{localizationSPM,GinnakisLoc,Gusta05SPM}, to estimate both the position and the orientation of the rigid body. In short, we present rigid body localization as an estimation problem from a signal processing perspective.

\subsection{Contributions}

We propose a novel framework for joint position and orientation estimation of a rigid body in a three-dimensional space by borrowing techniques from classical absolute localization, i.e., using {\it range-only measurements} between all the sensor-anchor pairs. We consider a rigid body on which a few sensor nodes are mounted. These sensor nodes can be visualized as a sensor array. The absolute position of the sensors on the rigid body, or the absolute position of the rigid body itself is not known. However, the topology of how the sensors are mounted on the rigid body or the array geometry is known up to a certain accuracy. Based on the noisy {\it range-only} measurements between all the sensor-anchor pairs, we propose novel estimators for rigid body localization. More specifically, we propose a framework of rigid body localization as an {\it add-on} to the existing IMU based systems to correct for the drift errors or in situations where inertial measurements are not possible.

For this purpose, we express the orientation of the rigid body as a {\it rotation matrix} and the absolute position of the rigid body (instead of the absolute positions of all the sensors) as a {\it translation} vector, i.e., we represent the absolute position of the sensors as an affine function of the Stiefel manifold. We propose a {\it least-squares} (LS) estimator to jointly estimate the translation vector and the rotation matrix. Since rotation matrices are unitary matrices, we also propose a {\it simplified unitarily constrained least-squares} (SUC-LS) and {\it optimal unitarily constrained least-squares} (OUC-LS) estimator, both of which solve an optimization problem on the Stiefel manifold. We also derive a new {\it unitarily constrained Cram\'er-Rao bound} (UC-CRB), which is used as a benchmark for the proposed estimators. 

In many applications, the sensor topology might not be accurately known, i.e., the known topology of the sensor array can be noisy. Such perturbations are typically introduced while mounting the sensors during fabrication or if the body is not entirely rigid. To take such perturbations into account, we propose a {\it simplified unitarily constrained total-least-squares} (SUC-TLS) and an {\it optimal unitarily constrained total-least-squares} (OUC-TLS) estimator. The performance of the proposed estimators is analyzed using simulations. Using a sensor array with a known geometry not only enables orientation estimation, but also yields a better localization performance.

The framework proposed in this work is based on a static position and orientation, unlike most of the orientation estimators which are based on inertial measurements and a certain dynamical state-space model (e.g.,~\cite{dynamicalpose1}). Hence, our approach is useful when there is no dynamic model available. We should stress, however, that the proposed framework is believed to be suitable also for the estimation of dynamical position and orientation (tracking) using either a state-constrained Kalman filter or a moving horizon estimator (MHE), yet this extension is postponed to future work. 

\subsection{Outline and notations}
The remainder of this paper is organized as follows. The considered problem is described in Section~\ref{sec:probform}. In Section~\ref{sec:prel}, we provide preliminary information on classical LS based localization, and the Stiefel manifold, which are required to describe the newly developed estimators. The estimators based on perfect knowledge of the sensor topology and with perturbations on the known sensor topology are discussed in Section~\ref{sec:LS} and Section~\ref{sec:TLS}, respectively. In Section~\ref{sec:CRB}, we derive the unitarily constrained Cram\'er-Rao bound. Numerical results based on simulations are provided in Section~\ref{sec:sim}. The paper concludes with some remarks in Section~\ref{sec:con}.

The notations used in this paper are described as follows. Upper (lower) bold face letters are used for matrices (column vectors). $(\cdot)^T$ denotes transposition. $\mathrm{diag}(.)$ refers to a block diagonal matrix with the elements in its argument on the main diagonal. $\mathbf{1}_N$ ($\mathbf{0}_N$) denotes the $N \times 1$ vector of ones (zeros). $\mathbf{I}_N$ is an identity matrix of size $N$. $\mathbb{E}\{.\}$ denotes the expectation operation. $\otimes$ is the Kronecker product.  $(.)^\dag$ denotes the pseudo inverse, i.e., for a full column-rank tall matrix ${\bf A}$ the pseudo inverse (or the left-inverse) is given by ${\bf A}^\dag = ({\bf A}^T{\bf A})^{-1}{\bf A}^T$, and for a full row-rank wide matrix ${\bf A}$ the pseudo inverse (or the right-inverse) is given by ${\bf A}^\dag = {\bf A}^T({\bf A}{\bf A}^T)^{-1}$. The right- or left-inverse will be clear from the context. $\mathrm{vec}(.)$ is an $MN \times 1$ vector formed by stacking the columns of its matrix argument of size $M \times N$. $\mathrm{vec}^{-1}(.)$ is an $M \times N$ matrix formed by the inverse $\mathrm{vec}(.)$ operation on an $MN \times 1$ vector. Finally, $\mathrm{tr}(.)$ denotes the matrix trace operator.

\section{Problem formulation}\label{sec:probform}
\begin{figure} [!t]
\centering
\includegraphics[width=3in]{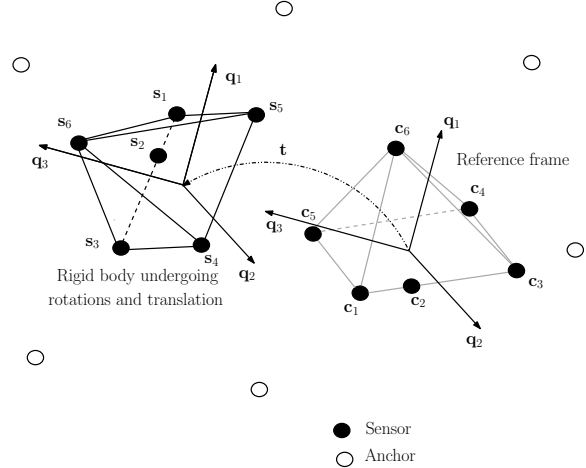}
\caption{An illustration of the sensors on a rigid body undergoing a rotation and translation.}
\label{fig:model}
\end{figure}
Consider a network with $M$ anchors (nodes with known absolute locations) and $N$ sensors in a $3$-dimensional space. The sensors are mounted on a rigid body as illustrated in Fig.~\ref{fig:model}. The absolute position of the sensors or the rigid body itself in the $3$-dimensional space is not known. The wireless sensors are mounted on the rigid body (e.g., at the factory), and the topology of how these sensors are mounted is known up to a certain accuracy. In other words, we connect a so-called {\it reference frame} to the rigid body, as illustrated in Fig. 1, and in that reference frame, the coordinates of the $n$th sensor are given by the known $3 \times 1$ vector ${\bf c}_n= [ c_{n,1}, c_{n,2}, c_{n,3} ]^T$. So the sensor topology is basically determined by the matrix ${\bf C} = [ {\bf c}_1, {\bf c}_2, \dots, {\bf c}_N ] \in {\mathbb R}^{3 \times N}$. 

Let the absolute coordinates of the $m$th anchor and the $n$th sensor be denoted by a $3 \times 1$ vector ${\bf a}_m$ and ${\bf s}_n$, respectively. These absolute positions of the anchors and the sensors are collected in the matrices ${\bf A} = [{\bf a}_1,{\bf a}_2,\ldots,{\bf a}_M] \in \mathbb{R}^{3 \times M}$ and ${\bf S} = [{\bf s}_1,{\bf s}_2,\ldots,{\bf s}_N] \in \mathbb{R}^{3 \times N}$, respectively. 

The pairwise distance (or the Euclidean distance) between the $m$th anchor and the $n$th sensor $r_{m,n}={\|{\bf a}_m-{\bf s}_n\|}_2$ is typically obtained by ranging~\cite{localizationSPM,ChepuriSPL}. The noisy range measurements can be expressed as
\begin{equation*}
\begin{aligned}
\label{eq:range}
{y}_{m,n}=r_{m,n} + v_{m,n}\\
 \end{aligned}
\end{equation*} where $v_{m,n}$ is the additive stochastic noise resulting from the ranging process. Assuming TOA-based ranging, we model $v_{m,n}$ as an independent and identically distributed (i.i.d.) zero mean white random process with variance
\begin{equation*}
\begin{aligned}
\label{eq:variance}
\sigma^2_{m,n} =  \frac{r_{m,n}^2}{(\sigma_r^2/\sigma_v^2)}= \frac{r_{m,n}^2}{\zeta},
\end{aligned}
\end{equation*}
where we define the reference range $\zeta = \frac{\sigma_v^2}{\sigma_r^2}$ with $\sigma_v^2$ being the reference ranging noise variance and $\sigma_r^2$ indicating the confidence on the range measurements. The reference ranging noise is the ranging noise when any two nodes are a unit distance apart, and this is nominally the same for all the anchors. The $r_{m,n}^2$ term penalizes the range measurements based on distance, and is due to the path-loss model assumption. 

The squared-range between the $m$th anchor and the $n$th sensor can be written as
\begin{equation*}
\begin{aligned}
\label{eq:rangesquare}
r_{m,n}^2&={\|{\bf a}_m - {\bf s}_n\|}_2^2 \\
&= {\|{\bf a}_m\|^2-2{\bf a}_m^T{\bf s}_n+\|{\bf s}_n\|^2}
\end{aligned}
\end{equation*} and the squared-range measurements as
\begin{equation}
\begin{aligned}
\label{eq:rangesquare}
 {d}_{m,n} = y_{m,n}^2&=r^2_{m,n} + 2r_{m,n}v_{m,n}+v_{m,n}^2\\
 &=  {r}_{m,n}^2 +n_{m,n},
\end{aligned}
\end{equation}
where $n_{m,n} = 2r_{m,n}v_{m,n}+v_{m,n}^2$ is the new noise term introduced due to the squaring of the range measurements. 

Under the condition of sufficiently small errors and ignoring the higher-order terms, we can approximate the stochastic properties of $n_{m,n}$, and compute the mean and the variance respectively as
\begin{eqnarray}
\mathbb{E}\{n_{m,n}\}&\approx& 0 \label{eq:mean} \nonumber\\
\text{and} \quad \mathbb{E}\{n_{m,n}^2\} &\approx& 4{r}^{2}_{m,n}\sigma^2_{m,n}.\nonumber
\end{eqnarray}

Since all the sensors are mounted on the rigid body, it is reasonable to assume that the noise from an anchor to any sensor (and, hence to the rigid body) is approximately the same, especially when the anchors are far away from the rigid body. Hence, we use a simplified noise model\footnotemark[1]  with variance 
\begin{equation}
\begin{aligned}
\label{eq:variance_approx2}
\mathbb{E}\{n_{m,n}^2\} \approx {r}^{2}_{m,1}\sigma^2_{m,1} = {r}^{4}_{m,1} / \zeta= \sigma^2_{m}
\end{aligned}
\end{equation} 
where we choose sensor ${\bf s}_1$ arbitrarily just for illustration purposes, and in principle, this can be any sensor on the rigid body. 

\footnotetext[1]{More accurate noise models could be considered, but this is not the main focus of this paper.}

The problem discussed in this paper can now briefly be stated as follows: Given the range measurements between each sensor-anchor pair and the topology of the sensors on a rigid body, jointly determine the position and orientation (rotation along each dimension) of the rigid body in $\mathbb{R}^3$.
\section{Preliminaries} \label{sec:prel}
Defining 
\begin{eqnarray}
{\bf d}_n &=& [d_{1,n},d_{2,n},\ldots,d_{M,n}]^T \in \mathbb{R}^{M \times 1},\nonumber\\
\text{and} \quad {\bf u} &=& [\|{\bf a}_1\|^2,\|{\bf a}_2\|^2,\ldots,\|{\bf a}_M\|^2]^T \in \mathbb{R}^{M \times 1},\nonumber
 \end{eqnarray} 
we can write the squared-range measurements between the $n$th sensor to each of the anchors in vector form as
\begin{equation}
\label{eq:sqpairwisedistance}
{\bf d}_n = {\bf u} - 2{\bf A}^T{\bf s}_n + \|{\bf s}_n\|^2 {\bf 1}_M + {\bf n}_n,
\end{equation} where ${\bf n}_n = [n_{1,n},n_{2,n},\ldots,n_{M,n}]^T \in \mathbb{R}^{M \times 1}$ is the error vector. The covariance matrix of the error vector will be 
\begin{equation*}
{\boldsymbol \Sigma}_{\bf n} = \mathbb{E}\{{\bf n}_n{\bf n}_n^T\}= \mathrm{diag}(\sigma_1^2,\sigma_2^2,\ldots,\sigma_M^2) \in \mathbb{R}^{M \times M}.
\end{equation*} 
Let us now pre-whiten (\ref{eq:sqpairwisedistance}) to obtain an identity noise covariance matrix by multiplying both sides of (\ref{eq:sqpairwisedistance}) with a pre-whitening matrix ${\bf W} \in \mathbb{R}^{M \times M}$, which leads to
\begin{equation}
\label{eq:sqpairwisedistance1}
{\bf W} {\bf d}_n = {\bf W} ({\bf u} - 2{\bf A}^T{\bf s}_n + \|{\bf s}_n\|^2 {\bf 1}_M + {\bf n}_n).
\end{equation} 
The optimal ${\bf W}$ is ${\bf W}^* = \boldsymbol{\Sigma}_{\bf n}^{-1/2}$, which however, depends on the unknown parameter $r_{m,1}$. 
Hence, we use ${\bf W} = \hat{\boldsymbol{\Sigma}}_{\bf n}^{-1/2}$, where $\hat{\boldsymbol{\Sigma}}_{\bf n}$ is the estimated noise covariance matrix computed using $\hat{\sigma}^2_m =  {d}^{2}_{m,1}/\zeta$, which is  based on the measured parameter $d_{m,1}$.


We now try to eliminate $\|{\bf s}_n\|^2$ in \eqref{eq:sqpairwisedistance1}, which can be done by projecting out the vector ${\bf W}{\bf 1}_M$. For this, we apply an orthogonal projection matrix
\begin{equation*}
\label{eq:PM}
{\bf P}_M \triangleq {\bf I}_M - \frac{{\bf W}{\bf 1}_M{\bf 1}_M^T{\bf W}}{{\bf 1}_M^T{\bf W}{\bf W}{\bf 1}_M} \quad \in \mathbb{R}^{M \times M}, 
\end{equation*}
such that ${\bf P}_M{\bf W}{\bf 1}_M ={\bf 0}$. However, since this would again color the noise, we propose to use an isometry decomposition of ${\bf P}_M$, i.e.,  
\begin{equation*}
\label{eq:PMdecomposition}
{\bf P}_M = {\bf U}_M{\bf U}_M^T,
\end{equation*}
where ${\bf U}_M$ is an $M \times (M-1)$ matrix obtained by collecting orthonormal basis vectors of the null-space of ${\bf W}{\bf 1}_M$ so that ${\bf U}_M^T{\bf W}{\bf 1}_M={\bf 0}_{M-1}$.
Then, in order to eliminate the $\|{\bf s}_n\|^2{\bf W}{\bf 1}_M$ term in (\ref{eq:sqpairwisedistance1}) {\it without} coloring the noise, we left-multiply both sides of (\ref{eq:sqpairwisedistance1}) with ${\bf U}_M^T$, which leads to 
 \begin{equation}
{\bf U}_M^T{\bf W} ({\bf d}_n-{\bf u}) =-2{\bf U}_M^T{\bf W}{\bf A}^T{\bf s}_n + {\bf U}_M^T{\bf W}{\bf n}_n.
\label{eq:linear}
\end{equation} 
Stacking (\ref{eq:linear}) for all the $N$ sensors on the rigid body, we obtain
\begin{equation}
{\bf U}_M^T {\bf W} {\bf D}= -2{\bf U}_M^T{\bf W}{\bf A}^T{\bf S}  + {\bf U}_M^T {\bf W} {\bf N} \label{eq:linear1},
\end{equation}
where 
\begin{eqnarray}
\label{eq:D}
{\bf D} &=& [ {\bf d}_1,{\bf d}_2,\ldots, {\bf d}_N ] - {\bf u} {\bf 1}_N^T  \in \mathbb{R}^{M \times N},\nonumber\\
\text{and} \quad {\bf N} &=& [{\bf n}_1,{\bf n}_2,\ldots,{\bf n}_N]  \in \mathbb{R}^{M \times N}.\nonumber
\end{eqnarray}
Note that the approximation of the noise model in (\ref{eq:variance_approx2}) allows this stacking by using a common pre-whitening matrix ${\bf W}$ for all the sensors. 

\subsection{Classical LS-based localization}
The pre-whitened linear model in \eqref{eq:linear1} can be further simplified to
\begin{equation}
\label{eq:locallinearmodel}
\bar{\bf D}= \bar{\bf A}{\bf S} + \bar{\bf N}, 
\end{equation}
where we have introduced the following matrices:
\begin{equation*}
\label{eq:subs}
\begin{aligned}
\bar{\bf D} &= {\bf U}_M^T {\bf W}{\bf D} \in \mathbb{R}^{(M-1) \times N},  \\
\bar{\bf A} &= -2{\bf U}_M^T{\bf W}{\bf A}^T \in \mathbb{R}^{(M-1) \times 3},\\
\text{and} \quad \bar{\bf N} &= {\bf U}_M^T {\bf W}{\bf N} \in \mathbb{R}^{(M-1) \times N}.
\end{aligned}
\end{equation*}

Since \eqref{eq:locallinearmodel} is row-wise white, we can use the classical (unweighted) LS solution to estimate the absolute position of the sensors as
\begin{equation}
\label{eq:LSlocalization}
\begin{aligned}
\hat{\bf S}_{LS}  &=\argmin_{\bf S} \quad {\|\bar{\bf D}-\bar{\bf A}{\bf S}\|}_F^2\\
&= \bar{\bf A}^\dag \bar{\bf D},
\end{aligned}
\end{equation} 
which is unique if $\bar{\bf A}$ is full column-rank, and this requires $M \geq 4$.

Note that in this classical LS-based localization, the knowledge about the known sensor topology is not exploited, and the absolute position of each sensor is estimated separately.
 
\subsection{Known sensor topology and the Stiefel manifold}

A  Stiefel manifold~\cite{Stiefel1999} in three dimensions, commonly denoted by $\mathcal{V}_{3,3}$, is the set of all $3 \times 3$ unitary matrices 
${\bf Q} = [{\bf q}_1,{\bf q}_2,{\bf q}_3] \in \mathbb{R}^{3 \times 3}$, i.e., 
\begin{equation}
\label{eq:stiefel}
\mathcal{V}_{3,3}= \{{\bf Q} \in \mathbb{R}^{3 \times 3} : {\bf Q}^T{\bf Q}= {\bf Q}{\bf Q}^T = {\bf I}_3\}.
\end{equation} 

The absolute position of the $n$th sensor can be written as an affine function of a point on the Stiefel manifold, i.e.,
\begin{eqnarray}
{\bf s}_n &=&  c_{n,1}{\bf q}_1 + c_{n,2}{\bf q}_2 + c_{n,3}{\bf q}_3 + {\bf t}\nonumber\\
&=&  {\bf Q} {\bf c}_n + {\bf t}\label{eq:sensorstief},
\end{eqnarray}
where ${\bf t} \in \mathbb{R}^{3 \times 1}$ denotes the translation and is unknown. 

More specifically, the parameter vector ${\bf t}$ refers to the unknown position of the rigid body. The combining weights ${\bf c}_n$ are equal to the {\it known} coordinates of the $n$th sensor in the reference frame, as introduced in Section \ref{sec:prel}. This means that the unknown unitary matrix ${\bf Q}$ actually tells us how the rigid body has rotated in the reference frame. When there is no rotation, then ${\bf Q} = {\bf I}_3$. The relation in \eqref{eq:sensorstief} is sometimes also referred to as the {\it rigid body transformation}. The rotation matrices can uniquely (both geometrically and kinematically) represent the orientation of a rigid body unlike Euler angles or unit quaternions (see~\cite{rigidbodyControlmag} for more details). The rigid body transformation is also used in computer vision applications for motion parameter estimation~\cite{ArunCTLS}.

If we define ${\bf C} = [{\bf c}_1,{\bf c}_2,\ldots, {\bf c}_N]$, then as in \eqref{eq:sensorstief}, the absolute position of all the sensors (or the sensor array) can be written as an affine function of the Stiefel manifold
\begin{equation}
{\bf S}  =  {\bf Q}{\bf C} + {\bf t}{\bf 1}_N^T= \overbrace{\left[\begin{array}{c|c}{\bf Q} & {\bf t}\end{array}\right]}^{{\bf Q}_e} \overbrace{\left[\begin{array}{c}{\bf C} \\ \hline {\bf 1}^T_N \end{array}\right]}^{{\bf C}_e}.\label{eq:sensorstief1}
\end{equation}
In (\ref{eq:sensorstief1}), we express the unknown sensor locations ${\bf S}$ in terms of the unknown rotations ${\bf Q}$, an unknown translation ${\bf t}$, and a known sensor topology ${\bf C}$. 
\section{Proposed estimators: known topology} \label{sec:LS}
In this section, we propose a number of algorithms to estimate the position of the rigid body, i.e., ${\bf t}$, and the orientation of the rigid body, i.e., ${\bf Q}$. To start, we propose an LS-based estimator to jointly estimate ${\bf Q}$ and ${\bf t}$. 
\subsection{LS estimator (Unconstrained)} 

Substituting (\ref{eq:sensorstief1}) in  (\ref{eq:linear1}) we arrive at the following linear model
\begin{equation*}
\label{eq:jointLS}
{\bf U}_M^T {\bf W} {\bf D}= -2{\bf U}_M^T{\bf W}{\bf A}^T{{\bf Q}_e}{{\bf C}_e}  + {\bf U}_M^T {\bf W}{\bf N} 
\end{equation*}
which can be written as
\begin{equation}
\label{eq:jointLS1}
\bar{\bf D}= \bar{\bf A}{{\bf Q}_e}{{\bf C}_e}  + \bar{\bf N}
\end{equation}
recalling that $\bar{\bf D} = {\bf U}_M^T {\bf W}{\bf D}$, $\bar{\bf A} = -2{\bf U}_M^T{\bf W}{\bf A}^T $, and $\bar{\bf N}={\bf U}_M^T {\bf W}{\bf N}$ as defined earlier. 
Using the matrix property
\begin{equation*}
\mathrm{vec}({\bf A}{\bf B}{\bf C}) = ({\bf C}^T \otimes {\bf A}) \mathrm{vec}({\bf B}),
\end{equation*}
we can vectorize (\ref{eq:jointLS1}), leading to
\begin{equation}
\label{eq:jointLS2}
\bar{\bf d}= ({\bf C}_e^T \otimes \bar{\bf A}) {{\bf q}_e}  + \bar{\bf n},
\end{equation}
where 
\begin{eqnarray*}
{\bf q}_e = \mathrm{vec}({\bf Q}_e) &=& [{\bf q}_1^T, {\bf q}_2^T, {\bf q}_3^T, {\bf t}^T]^T \in \mathbb{R}^{12 \times 1}, \\
\bar{\bf d} &=& \mathrm{vec}(\bar{\bf D}) \in \mathbb{R}^{(M-1)N \times 1},\\
\text{and} \quad \bar{\bf n} &=& \mathrm{vec}(\bar{\bf N}) \in \mathbb{R}^{(M-1)N \times 1}.
\end{eqnarray*}

\begin{mylem}
The covariance matrix of $\bar{\bf n}$ will be $\mathbb{E}\{\bar{\bf n}\bar{\bf n}^T\}  \approx {\bf I}_{(M-1)N}$.
\end{mylem}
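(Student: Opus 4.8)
The plan is to compute the covariance of $\bar{\bf n} = \mathrm{vec}(\bar{\bf N})$ by tracking how the original per-sensor noise vectors ${\bf n}_n$ are transformed through the pre-whitening and projection operations, and then invoke the approximate noise model established in the problem formulation. First I would recall that $\bar{\bf N} = {\bf U}_M^T {\bf W} {\bf N}$ with ${\bf N} = [{\bf n}_1,\ldots,{\bf n}_N]$, so that the $n$th column of $\bar{\bf N}$ is $\bar{\bf n}_n = {\bf U}_M^T {\bf W} {\bf n}_n$. Vectorizing columnwise, $\bar{\bf n}$ is the stack of the $\bar{\bf n}_n$, hence $\mathbb{E}\{\bar{\bf n}\bar{\bf n}^T\}$ is block-structured with $(n,n')$ block equal to ${\bf U}_M^T {\bf W}\, \mathbb{E}\{{\bf n}_n {\bf n}_{n'}^T\}\, {\bf W}^T {\bf U}_M$.

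Next I would argue that the off-diagonal blocks vanish: the ranging noise $v_{m,n}$ is i.i.d.\ across sensor-anchor pairs, so for $n \neq n'$ the entries of ${\bf n}_n$ and ${\bf n}_{n'}$ are independent and (to the same order of approximation as used for the mean) have zero cross-correlation, giving $\mathbb{E}\{{\bf n}_n {\bf n}_{n'}^T\} = {\bf 0}$. For the diagonal blocks, I would use the simplified noise model \eqref{eq:variance_approx2}, under which $\mathbb{E}\{{\bf n}_n {\bf n}_n^T\} = {\boldsymbol\Sigma}_{\bf n} = \mathrm{diag}(\sigma_1^2,\ldots,\sigma_M^2)$ is the \emph{same} for every $n$. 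Then each diagonal block is ${\bf U}_M^T {\bf W} {\boldsymbol\Sigma}_{\bf n} {\bf W}^T {\bf U}_M$, and substituting the (idealized) whitening choice ${\bf W} = {\boldsymbol\Sigma}_{\bf n}^{-1/2}$ with ${\bf W}^T = {\bf W}$ collapses this to ${\bf U}_M^T {\bf U}_M = {\bf I}_{M-1}$, using that the columns of ${\bf U}_M$ are orthonormal. Assembling the $N$ identity diagonal blocks and zero off-diagonal blocks yields $\mathbb{E}\{\bar{\bf n}\bar{\bf n}^T\} \approx {\bf I}_{(M-1)N}$.

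The main obstacle — and the reason the statement carries an ``$\approx$'' rather than an equality — is that none of these simplifications is exact. The covariance ${\boldsymbol\Sigma}_{\bf n}$ itself is only an approximation obtained by dropping higher-order terms in $v_{m,n}$ and by adopting the common-noise model \eqref{eq:variance_approx2}; moreover the whitening matrix actually used is $\hat{\boldsymbol\Sigma}_{\bf n}^{-1/2}$ built from the measured $d_{m,1}$ rather than the true ${\boldsymbol\Sigma}_{\bf n}^{-1/2}$, so ${\bf W}{\boldsymbol\Sigma}_{\bf n}{\bf W}^T$ is only close to ${\bf I}_M$. I would therefore present the derivation as holding under the same ``sufficiently small errors'' regime already invoked for $\mathbb{E}\{n_{m,n}\}\approx 0$ and $\mathbb{E}\{n_{m,n}^2\}\approx 4r_{m,n}^2\sigma_{m,n}^2$, and note that the projection onto the orthonormal basis ${\bf U}_M$ is exactly norm-preserving, so it contributes no additional error beyond what is already in the whitening step. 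This keeps the argument short: the only real content is the block-diagonal bookkeeping plus the orthonormality ${\bf U}_M^T{\bf U}_M = {\bf I}_{M-1}$.
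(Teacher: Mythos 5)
Your proposal is correct and follows essentially the same route as the paper's Appendix~\ref{app:covmatrix}: the paper writes $\bar{\bf n} = ({\bf I}_N \otimes {\bf U}_M^T{\bf W})\mathrm{vec}({\bf N})$ and uses $\mathbb{E}\{\mathrm{vec}({\bf N})\mathrm{vec}({\bf N})^T\} = {\bf I}_N \otimes {\boldsymbol \Sigma}$ to obtain ${\bf I}_N \otimes {\bf U}_M^T{\bf W}{\boldsymbol \Sigma}{\bf W}^T{\bf U}_M \approx {\bf I}_{(M-1)N}$, which is exactly your block-diagonal bookkeeping written via the Kronecker identity. Your added remarks on why the cross-sensor blocks vanish and why the result is only approximate (estimated whitening matrix $\hat{\boldsymbol \Sigma}_{\bf n}^{-1/2}$ and the small-error noise model) match the paper's stated justification.
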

\begin{IEEEproof}
See Appendix~\ref{app:covmatrix}.
\end{IEEEproof}

Due to the whiteness of \eqref{eq:jointLS2}, as shown by the lemma, we propose to jointly estimate the unknown rotations ${\bf Q}$ and the translation ${\bf t}$ using the following (unweighted) LS estimator
\begin{equation}
\begin{aligned}
\label{eq:jointLSsolution}
{\hat{\mathbf q}}_{e,LS} &= \argmin_{{\bf q}_e} \quad {\|\bar{\bf d} - ({\bf C}_e^T \otimes \bar{\bf A}) {{\bf q}_e}\|}_2^2 \\
&= ({\bf C}_e^T \otimes \bar{\bf A})^\dag \bar{\bf d},
\end{aligned}
\end{equation} which will have a unique solution if ${\bf C}_e^T \otimes \bar{\bf A}$ has full column-rank, i.e., ${\bf C}_e^T$ and $\bar{\bf A}$ are both full-column rank, and this requires $(M-1)N \geq 12$. Finally, we have 
\begin{equation}
\label{eq:jointLSsolution1}
\hat{\mathbf Q}_{e,LS} = \mathrm{vec}^{-1}(\hat{\mathbf q}_{e,LS})=\left[\begin{array}{c|c}{\hat{\bf Q}}_{LS} & \hat{\bf t}_{LS}\end{array}\right].
\end{equation} 
\subsection{Unitarily constrained LS  (UC-LS) estimator}
The solution of the unconstrained LS estimator \eqref{eq:jointLSsolution1} does not necessarily lie in the set $\mathcal{V}_{3,3}$, i.e., the columns of the LS estimate $\hat{\bf Q}_{LS}$ obtained in (\ref{eq:jointLSsolution}) are generally not orthogonal to each other and they need not have a unit norm.  Hence,  we next propose two LS estimators with a unitary constraint on ${\bf Q}$. Both these estimators solve an optimization problem on the Stiefel manifold.

For this purpose, we decouple the rotations and the translations in (\ref{eq:sensorstief1}) by eliminating the vector ${\bf 1}^T_N$, and hence the matrix ${\bf t}{\bf 1}_N^T$. In order to eliminate ${\bf t}{\bf 1}_N^T$, we use an isometry matrix ${\bf U}_N$, and as earlier this matrix is obtained by the isometry decomposition of ${\bf P}_N = {\bf I}_N - \frac{1}{N} {\bf 1}_N{\bf 1}_N^T$, given by
\begin{eqnarray}
{\bf P}_N &=& {\bf U}_N{\bf U}_N^T, 
\end{eqnarray}
where ${\bf U}_N$ is an $N \times (N-1)$ matrix obtained by collecting orthonormal basis vectors of the null-space of ${\bf 1}_N$ such that ${\bf 1}_N^T{\bf U}_N= {\bf 0}_{N-1}^T$.
Right-multiplying ${\bf U}_N$ on both sides of (\ref{eq:sensorstief1}) leads to
\begin{equation}
{\bf S}{\bf U}_N  =  {\bf Q}{\bf C}{\bf U}_N .\label{eq:sensorstief2}
\end{equation}
Combining (\ref{eq:linear1}) and (\ref{eq:sensorstief2}) we get the following linear model 
\begin{equation*}
{\bf U}_M^T {\bf W}{\bf D}{\bf U}_N= \bar{\bf A}{\bf Q}{\bf C}{\bf U}_N+ {\bf U}_M^T {\bf W}{\bf N}{\bf U}_N
\end{equation*}
which can be further simplified as
\begin{equation}
\begin{aligned}
\label{eq:model1}
\tilde{\bf D}= \bar{\bf A}{\bf Q}\bar{\bf C}+ \tilde {\bf N}
\end{aligned}
\quad \Leftrightarrow \quad
\begin{aligned}
\tilde{\bf d}= (\bar{\bf C}^T \otimes \bar{\bf A}){\bf q}+ \tilde {\bf n},
\end{aligned}
\end{equation}
where $\tilde{\bf d} = \mathrm{vec}(\tilde{\bf D})$, ${\bf q} = \mathrm{vec}({\bf Q})$, and $\tilde{\bf n}=\mathrm{vec}(\tilde{\bf N})$. Here, we have introduced the following matrices:
\begin{eqnarray*}
\tilde{\bf D}&=& {\bf U}_M^T{\bf W} {\bf D}{\bf U}_N \in \mathbb{R}^{(M-1) \times (N-1)},\\
\bar{\bf C} &=& {\bf C} {\bf U}_N \in \mathbb{R}^{3 \times (N-1)},\\
\tilde{\bf N} &=& {\bf U}_M^T {\bf W}{\bf N}{\bf U}_N \in \mathbb{R}^{(M-1) \times (N-1)}.
\end{eqnarray*}
\begin{mylem}
The covariance matrix of $\tilde{\bf n}$ will be $\mathbb{E}\{\tilde{\bf n}\tilde{\bf n}^T\}  \approx {\bf I}_{K}$, with $K=(M-1)(N-1)$.
\end{mylem}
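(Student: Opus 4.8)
The plan is to reduce this lemma to the previous one (the covariance of $\bar{\bf n}$) by exploiting the fact that $\tilde{\bf n} = \mathrm{vec}(\tilde{\bf N})$ with $\tilde{\bf N} = {\bf U}_M^T{\bf W}{\bf N}{\bf U}_N$, whereas $\bar{\bf N} = {\bf U}_M^T{\bf W}{\bf N}$, so that $\tilde{\bf N} = \bar{\bf N}{\bf U}_N$. First I would vectorize this relation: using $\mathrm{vec}({\bf A}{\bf B}{\bf C}) = ({\bf C}^T\otimes{\bf A})\mathrm{vec}({\bf B})$ with ${\bf A} = {\bf I}_{M-1}$, we get $\tilde{\bf n} = ({\bf U}_N^T \otimes {\bf I}_{M-1})\,\bar{\bf n}$. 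Hence
\begin{equation*}
\mathbb{E}\{\tilde{\bf n}\tilde{\bf n}^T\} = ({\bf U}_N^T \otimes {\bf I}_{M-1})\,\mathbb{E}\{\bar{\bf n}\bar{\bf n}^T\}\,({\bf U}_N \otimes {\bf I}_{M-1}).
\end{equation*}

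Next I would invoke Lemma 1, which gives $\mathbb{E}\{\bar{\bf n}\bar{\bf n}^T\} \approx {\bf I}_{(M-1)N}$. Substituting this in and using the mixed-product property of the Kronecker product, $({\bf U}_N^T\otimes{\bf I}_{M-1})({\bf U}_N\otimes{\bf I}_{M-1}) = ({\bf U}_N^T{\bf U}_N)\otimes{\bf I}_{M-1}$. Since ${\bf U}_N$ is an isometry (its columns form an orthonormal basis of the null-space of ${\bf 1}_N$), we have ${\bf U}_N^T{\bf U}_N = {\bf I}_{N-1}$, so the product collapses to ${\bf I}_{N-1}\otimes{\bf I}_{M-1} = {\bf I}_{(M-1)(N-1)} = {\bf I}_K$, which is exactly the claim.

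Alternatively, and perhaps more in the spirit of how Lemma 1 is likely proved in the appendix, one could argue directly: the entries of ${\bf N}$ across sensors are (approximately) independent with the common per-anchor variance structure of \eqref{eq:variance_approx2}, so $\mathbb{E}\{\mathrm{vec}({\bf N})\mathrm{vec}({\bf N})^T\} \approx {\bf I}_N \otimes {\boldsymbol\Sigma}_{\bf n}$; then $\tilde{\bf n} = ({\bf U}_N^T{\bf W} \otimes {\bf U}_M^T{\bf W})\mathrm{vec}({\bf N})$ — wait, more precisely $\tilde{\bf n} = ({\bf U}_N^T\otimes{\bf U}_M^T{\bf W})\mathrm{vec}({\bf N})$ after absorbing ${\bf W}$ — and the covariance becomes $({\bf U}_N^T{\bf U}_N)\otimes({\bf U}_M^T{\bf W}{\boldsymbol\Sigma}_{\bf n}{\bf W}^T{\bf U}_M) \approx {\bf I}_{N-1}\otimes{\bf U}_M^T{\bf U}_M = {\bf I}_K$, where the pre-whitening ${\bf W}\approx{\boldsymbol\Sigma}_{\bf n}^{-1/2}$ and ${\bf U}_M^T{\bf U}_M={\bf I}_{M-1}$ are used. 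Either route is short.

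The only real subtlety — the ``hard part,'' though it is mild — is bookkeeping the approximation: Lemma 1 already carries an $\approx$ because ${\bf W} = \hat{\boldsymbol\Sigma}_{\bf n}^{-1/2}$ is the estimated (not exact) whitener and because the small-error linearization of $n_{m,n}$ was used, so I must be careful to state that the right-multiplication by the \emph{exact} isometry ${\bf U}_N$ introduces no further approximation; it is an exact orthogonal operation, so the $\approx$ in Lemma 2 is inherited verbatim from Lemma 1 and nothing is lost. I would close by noting that this whiteness is what justifies applying an unweighted LS (and later the Stiefel-constrained LS) criterion to the model \eqref{eq:model1}.
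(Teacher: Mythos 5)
Your proposal is correct and follows essentially the same route as the paper: the appendix computes $\mathbb{E}\{\bar{\bf n}\bar{\bf n}^T\}$ by vectorizing with Kronecker products and then states that $\tilde{\bf n}$ is handled ``along similar lines,'' which is precisely your second argument, i.e., $\tilde{\bf n}=({\bf U}_N^T\otimes{\bf U}_M^T{\bf W})\mathrm{vec}({\bf N})$ so the covariance collapses to $({\bf U}_N^T{\bf U}_N)\otimes({\bf U}_M^T{\bf W}{\boldsymbol\Sigma}_{\bf n}{\bf W}^T{\bf U}_M)\approx{\bf I}_K$. Your first route (reducing to Lemma 1 via $\tilde{\bf n}=({\bf U}_N^T\otimes{\bf I}_{M-1})\bar{\bf n}$) is an equivalent shortcut, and your remark that the exact isometry ${\bf U}_N$ adds no further approximation is a valid and welcome clarification.
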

\begin{IEEEproof}
See Appendix~\ref{app:covmatrix}.
\end{IEEEproof}

Due to the whiteness of \eqref{eq:model1}, as shown by the lemma, we will try to estimate ${\bf Q}$ based on an (unweighted) LS problem with a quadratic equality constraint, as given by
\begin{equation}
\label{eq:UCLS}
\begin{aligned}
&\argmin_{{\bf Q}}  \, {\| \tilde{\bf d} - (\bar{\bf C}^T \otimes \bar{\bf A}){\bf q} \|}_2^2 \\
& \quad s.t. \quad {\bf Q}^T{\bf Q} = {\bf I}_3.
\end{aligned}
\end{equation}

The optimization problem in (\ref{eq:UCLS}) is non-convex due to the quadratic equality constraint, and does not have a closed form analytical solution. However, such optimization problems can be solved iteratively as will be discussed later on. Alternatively, the optimization problem in (\ref{eq:UCLS}) can be simplified and brought to the standard form of a {\it orthogonal Procrustes problem} (OPP).

\subsubsection{Simplified UC-LS (SUC-LS)}
Using $\check{\bf D} \triangleq \bar{\bf A}^\dag \tilde{\bf D}$, the simplified unitarily constrained-LS problem is then given as
\begin{equation}
 \label{eq:CLSprob}
  \begin{aligned}
&\argmin_{{\bf Q}} \quad {\|{\bf Q}\bar{\bf C} - \check{\bf D} \|}_F^2 \\
 &\quad s.t. \quad {\bf Q}^T{\bf Q} = {\bf I}_3
\end{aligned}
\end{equation}where we assume that $\bar{\bf A}$ has full column-rank.

This optimization problem is commonly referred to as the {\it orthogonal Procrustes problem} (OPP), and is generally used to compute the rotations between subspaces.

\begin{myrem} [Anchor placement]
For $M \geq 3$, the anchor positions can be designed such that the matrix $\bar{\bf A}$ will be full column-rank and well-conditioned (see e.g.~\cite{anchorchepuri}). Then, the matrix $\bar{\bf A}$ is left-invertible, i.e., $\bar{\bf A}^\dag \bar{\bf A} = {\bf I}_3$. 
\end{myrem}

 \begin{mytheo}[Solution to SUC-LS problem] 
The constrained LS problem in (\ref{eq:CLSprob}) has a closed-form analytical solution given by $\hat{\bf Q}_{SUC-LS} = {\bf V}{\bf U}^T$, where ${\bf U}$ and ${\bf V}$ are obtained from the singular value decomposition (SVD) of $\bar{\bf C}\check{\bf D}^T$ which is given by ${\bf U}{\bf \Sigma}{\bf V}^T$. The obtained solution is unique, if and only if $\bar{\bf C}\check{\bf D}^T$ is non-singular.
\end{mytheo}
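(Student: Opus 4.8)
The plan is to reduce the constrained least-squares objective in \eqref{eq:CLSprob} to the classical orthogonal Procrustes problem and then invoke its well-known SVD-based solution. First I would expand the Frobenius norm: using $\|{\bf X}\|_F^2 = \mathrm{tr}({\bf X}^T{\bf X})$, write
\begin{equation*}
{\|{\bf Q}\bar{\bf C} - \check{\bf D}\|}_F^2 = \mathrm{tr}(\bar{\bf C}^T{\bf Q}^T{\bf Q}\bar{\bf C}) - 2\,\mathrm{tr}({\bf Q}\bar{\bf C}\check{\bf D}^T) + \mathrm{tr}(\check{\bf D}^T\check{\bf D}).
\end{equation*}
The key observation is that on the feasible set ${\bf Q}^T{\bf Q} = {\bf I}_3$ the first term collapses to $\mathrm{tr}(\bar{\bf C}^T\bar{\bf C})$, which is a constant independent of ${\bf Q}$; the third term is also constant. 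Hence minimizing the objective over the Stiefel manifold is equivalent to \emph{maximizing} $\mathrm{tr}({\bf Q}\bar{\bf C}\check{\bf D}^T)$ subject to ${\bf Q}^T{\bf Q} = {\bf I}_3$.

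Next I would introduce the SVD $\bar{\bf C}\check{\bf D}^T = {\bf U}{\bf \Sigma}{\bf V}^T$ with ${\bf U},{\bf V}$ orthogonal and ${\bf \Sigma} = \mathrm{diag}(\sigma_1,\sigma_2,\sigma_3)$ having nonnegative entries. Substituting and using the cyclic property of the trace, $\mathrm{tr}({\bf Q}{\bf U}{\bf \Sigma}{\bf V}^T) = \mathrm{tr}({\bf \Sigma}{\bf V}^T{\bf Q}{\bf U}) = \mathrm{tr}({\bf \Sigma}{\bf Z})$ where ${\bf Z} \triangleq {\bf V}^T{\bf Q}{\bf U}$ is again orthogonal since it is a product of orthogonal matrices. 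Because ${\bf \Sigma}$ is diagonal, $\mathrm{tr}({\bf \Sigma}{\bf Z}) = \sum_{i=1}^{3}\sigma_i z_{ii}$, and every entry of an orthogonal matrix satisfies $|z_{ii}| \le 1$ (each column has unit norm). Therefore $\mathrm{tr}({\bf \Sigma}{\bf Z}) \le \sum_i \sigma_i$, with equality attained precisely when $z_{ii} = 1$ for all $i$ with $\sigma_i > 0$; the choice ${\bf Z} = {\bf I}_3$ attains the bound, which back-substitutes to ${\bf Q} = {\bf V}{\bf U}^T$. One should check this candidate is feasible: $({\bf V}{\bf U}^T)^T({\bf V}{\bf U}^T) = {\bf U}{\bf V}^T{\bf V}{\bf U}^T = {\bf I}_3$, so indeed $\hat{\bf Q}_{SUC-LS} = {\bf V}{\bf U}^T \in \mathcal{V}_{3,3}$.

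For the uniqueness claim I would argue as follows. If $\bar{\bf C}\check{\bf D}^T$ is non-singular, then all $\sigma_i > 0$, so equality in $\sum_i \sigma_i z_{ii} \le \sum_i \sigma_i$ forces $z_{ii} = 1$ for every $i$; combined with orthogonality of ${\bf Z}$ (each row and column being a unit vector), $z_{ii}=1$ implies the $i$th row and column equal $\mathbf{e}_i$, hence ${\bf Z} = {\bf I}_3$ is the \emph{only} maximizer, giving a unique ${\bf Q}$. Conversely, if $\bar{\bf C}\check{\bf D}^T$ is singular, say $\sigma_3 = 0$, then $z_{33}$ is unconstrained by the objective and one can rotate within the corresponding subspace (e.g.\ flip signs along a degenerate direction) to obtain distinct optimal ${\bf Q}$, so the solution is not unique — establishing the ``only if'' direction. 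The main obstacle, though a mild one, is being careful with the uniqueness bookkeeping: $|z_{ii}|\le 1$ alone does not pin down ${\bf Z}$, so one must genuinely use that a square orthogonal matrix with all diagonal entries equal to one must be the identity, and one must also handle the subtlety of whether the problem intends ${\bf Q}$ to be a proper rotation ($\det{\bf Q} = +1$) or merely orthogonal; here the statement poses the constraint as ${\bf Q}^T{\bf Q} = {\bf I}_3$ without a determinant sign, so the unconstrained-sign OPP solution ${\bf V}{\bf U}^T$ is exactly right and no reflection correction is needed.
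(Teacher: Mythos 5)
Your argument is correct and is essentially the proof the paper relies on: the paper simply cites the standard orthogonal Procrustes result (Golub--Van Loan), and your reduction to maximizing $\mathrm{tr}({\bf Q}\bar{\bf C}\check{\bf D}^T)$, the SVD/trace bound via ${\bf Z}={\bf V}^T{\bf Q}{\bf U}$, and the rank-based uniqueness discussion is exactly that standard derivation. Your observations that the constraint is plain orthogonality (so no determinant/reflection correction is needed) and that equality forces ${\bf Z}={\bf I}_3$ only when all singular values are positive are the right bookkeeping, so nothing is missing.
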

\begin{proof}
See~\cite[pg. 601]{golub1996matrix}.
\end{proof} 
Subsequently, the SUC-LS estimate of the translation ${\bf t}$ can be computed using $\hat{\bf Q}_{SUC-LS}$ in \eqref{eq:sensorstief1} and \eqref{eq:locallinearmodel}, i.e.,
\begin{equation}
\begin{aligned}
\hat{\bf t}_{SUC-LS}  &= \min_{\bf t} \quad {\| \bar{\bf D} -\bar{\bf A}(\hat{\bf Q}_{SUC-LS}{\bf C} + {\bf t}{\bf 1}_N^T) \|}_F^2\\
&= \frac{1}{N}(\bar{\bf A}^\dag\bar{\bf D}-\hat{\bf Q}_{CLS}{\bf C}){\bf 1}_N.
\end{aligned}
\end{equation}

\subsubsection{Optimal unitarily constrained LS (OUC-LS) estimator}

Pseudo inverting $\bar{\bf A}$ in (\ref{eq:CLSprob}) colors the noise which makes the unweighted LS problem in~\eqref{eq:CLSprob} suboptimal. This can be avoided by solving the OUC-LS formulation that was introduced earlier, which is given by
\begin{equation}
\label{eq:iterMLprob}
\begin{aligned}
\hat{\bf Q}_{OUC-LS} =& \argmin_{{\bf Q}}  \, {\| \tilde{\bf d} - (\bar{\bf C}^T \otimes \bar{\bf A}){\bf q} \|}_2^2 \\
& \quad s.t. \quad {\bf Q}^T{\bf Q} = {\bf I}_3.
\end{aligned}
\end{equation}
This is a linear LS problem on the Stiefel manifold which can be written as
\begin{equation}
\label{eq:iterMLprobgeneral}
\begin{aligned}
\hat{\bf Q}_{OUC-LS} &= \quad \argmin_{\bf Q} \, {\| f({\bf Q}) - {\bf b}\|}_2^2\\
&\quad \,s.t. \quad {\bf Q} \in \mathcal{V}_{3,3} \, ,
\end{aligned}
\end{equation}
with $f({\bf Q}): \mathbb{R}^{K \times 9} {\rightarrow} \mathbb{R}^K$ being a linear function in ${\bf Q}$, and for  \eqref{eq:iterMLprob} we use
\begin{equation}
\begin{aligned}
f({\bf Q}) &:= (\bar{\bf C}^T \otimes \bar{\bf A}) \mathrm{vec}({\bf Q}) \in \mathbb{R}^{K \times 1}\\
\text{and} \qquad {\bf b} &:= \tilde{\bf d}=\mathrm{vec}(\tilde{\bf D})\in \mathbb{R}^{K \times 1}.
\end{aligned}
\end{equation}

The optimization problem in \eqref{eq:iterMLprob} is a generalization of the OPP, and is sometimes also referred to as the {\it weighted orthogonal Procrustes problem} (WOPP)~\cite{WOPP}. Unlike the OPP of \eqref{eq:CLSprob}, which has a closed-form analytical solution, the optimization problem \eqref{eq:iterMLprob} does not have a closed-form solution. However, it can be solved using iterative methods based on either Newton's method~\cite{WOPP} or steepest descent~\cite{steepest} (sometimes also combinations of these two methods). Note that such algorithms can also be used for finding unitary matrices in joint diagonalization problems (e.g., in blind beamforming and blind source separation~\cite{allejanCMA,steepest}).

The advantages and disadvantages of both Newton's and steepest descent based algorithms are well-known (see~\cite{Boyd}).  In this paper, we restrict ourselves to Newton's method for solving \eqref{eq:iterMLprob} because of the availability of a good built-in initial value for the iterative algorithm, and because of its quadratic convergence. For self-consistency purposes, the algorithm is briefly described in Appendix~\ref{app:newton}. The algorithm from~\cite{WOPP} based on Newton's method is adapted to suit our problem, and it is summarized as Algorithm~\ref{alg:newton}.  
\begin{algorithm}[!t]
\caption{OUC-LS based on Newton's method}
\label{alg:newton}
\begin{algorithmic}
\item[1.] \textbf{Compute} the initial value ${\bf Q}_0$ by solving \eqref{eq:init1} and \eqref{eq:init2}.
\item[2.] \textbf{initialize} $i=0$, $\epsilon=10^{-6}$, $\epsilon_0 = \epsilon +1$.
\item[3.] \textbf{while} $\epsilon_i > \epsilon$
\item[4.] \hspace*{6mm} \textbf{If} $({\bf J}_Q^T{\bf J}_Q + {\bf H}) \succ {\bf 0}$
\item[5.] \hspace*{10mm} \textbf{compute} a {\it Newton} step ${\bf x}_N$ using \eqref{eq:Newton}.
\item[6.] \hspace*{6mm} \textbf{else} 
\item[7.] \hspace*{10mm} \textbf{compute} a {\it Gauss-Newton} step ${\bf x}_{GN}$ using \eqref{eq:GaussNewton}.
\item[8.] \hspace*{6mm} \textbf{compute} the optimal step-length $\hat{\gamma}$ using \eqref{eq:optimalsteplength}.
\item[9.] \hspace*{6mm} \textbf{update} ${\bf Q}_{i+1} = {\bf Q}_i \exp({\bf X}(\hat{\gamma}{\bf x}))$.
\item[10.] \hspace*{4.5mm} \textbf{increment} $i = i+1$.
\item[11.] \hspace*{4.5mm} \textbf{compute} $\epsilon_{i+1} = \frac{{\|{\bf J}_Q^T(f({\bf Q}_j)-{\bf b})\|}_2}{{\|{\bf J}_Q\|}_F{\|f({\bf Q}_j)-{\bf b}\|}_2}$.
\item[12.] \textbf{end while}.
\end{algorithmic}
\end{algorithm} 
Note that the algorithm does not converge to an optimal solution if the solution from SUC-LS is used as an initial value for the Newton's method due to the inverse operation in SUC-LS. In addition, as observed during the simulations, the iterative algorithm converges very quickly (less than 5 iterations). The readers are further referred to~\cite{WOPP} for a more profound treatment, and a performance analysis of the iterative algorithm.

As earlier, the estimate for the translation ${\bf t}$ can then be computed using $\hat{\bf Q}_{OUC-LS}$, and is given by 
\begin{equation}
\begin{aligned}
\hat{\bf t}_{OUC-LS} &= \frac{1}{N}(\bar{\bf A}^\dag\bar{\bf D}-\hat{\bf Q}_{OUC-LS}{\bf C}){\bf 1}_N.
\end{aligned}
\end{equation}

\subsection{Topology-aware (TA) localization}\label{sec:toploc}

A complementary by-product of the rigid body localization is the {\it topology-aware} localization. In this case, the position and orientation estimation is not the main interest, but the absolute position of each sensor node has to be estimated, given that the sensors lie on a certain manifold (or follow a certain topology). This latter information can be used as a constraint for estimating the sensor positions rather than estimating it separately. For the rigid body constraint, using $\hat{\bf Q}$ and $\hat{\bf t}$ obtained from either SUC-LS or OUC-LS estimator, we can compute the absolute positions of each sensor on the rigid body as
\begin{equation}
\hat{\bf S}_{TA} = \hat{\bf Q}{\bf C} + \hat{\bf t}{\bf 1}_N^T.
\end{equation}

\section{Perturbations on the known topology} \label{sec:TLS}
In the previous section, we assumed that the position of the sensors in the {\it reference frame} on a rigid body, i.e., the matrix ${\bf C}$, is accurately known. In practice, there is no reason to believe that errors are restricted only to the range measurements and there are no perturbations on the initial sensor positions. Such perturbations can be introduced for instance during  fabrication or if the body is not entirely rigid. 

So let us now assume that the position of the $n$th sensor in the reference frame ${\bf c}_n$ is noisy, and let us denote the perturbation on ${\bf c}_n$ by ${\bf e}_n$, and the perturbation on ${\bf C} =  [{\bf c}_1,{\bf c}_2,\ldots,{\bf c}_N]$ by ${\bf E} = [{\bf e}_1,{\bf e}_2,\ldots,{\bf e}_N]$. To account for such errors in the model, we propose total-least-squares (TLS) estimates for \eqref{eq:CLSprob} and \eqref{eq:iterMLprob}, again with unitary constraints.

\subsection{Simplified unitarily constrained TLS estimator (SUC-TLS)}

Taking the perturbations on the known topology into account, the data model in (\ref{eq:model1}) will be modified as 
\begin{equation}
\label{eq:TLSdatamodel}
{\bf Q}(\bar{\bf C} + \bar{\bf E})  = \check{\bf D} +  \check{\bf  N}
\end{equation}
where $\bar{\bf  E} ={\bf E}{\bf U}_N$ and $\check{\bf N}= \bar{\bf  A}^\dag \tilde{\bf  N}$.

The solution to the data model in (\ref{eq:TLSdatamodel}) leads to the classical TLS optimization problem, but now with a unitary constraint. The SUC-TLS optimization problem is given by
\begin{equation}
 \label{eq:TLSoptproblem}
 \begin{aligned}
&\argmin_{{\bf Q}} \quad {\|\bar{\bf E}\|}^2_F + {\|\check{\bf  N}\|}^2_F, \\
&s.t. \quad  {\bf Q}(\bar{\bf C} + \bar{\bf E})  = \check{\bf D} +  \check{\bf  N} \quad \text{and} \quad {\bf Q}^T{\bf Q} = {\bf I}_3.
\end{aligned}
\end{equation} 

 \begin{mytheo}[Solution to SUC-TLS~\cite{ArunCTLS}] 
The SUC-TLS problem in (\ref{eq:TLSoptproblem}) has the same solution as the simplified unitarily constrained LS problem.
\end{mytheo}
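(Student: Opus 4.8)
The plan is to show that at the optimum of the SUC-TLS problem~\eqref{eq:TLSoptproblem}, the objective, once minimized over the perturbations $\bar{\bf E}$ and $\check{\bf N}$ for a fixed unitary $\bf Q$, reduces exactly to the Frobenius residual ${\|{\bf Q}\bar{\bf C}-\check{\bf D}\|}_F^2$ appearing in~\eqref{eq:CLSprob}. Since the two problems then have the same objective as a function of $\bf Q$ over the same feasible set $\mathcal{V}_{3,3}$, they have the same minimizer, namely $\hat{\bf Q}_{SUC-LS}={\bf V}{\bf U}^T$ from Theorem~1. So the key is the inner minimization over the noise terms.

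First I would fix $\bf Q$ with ${\bf Q}^T{\bf Q}={\bf I}_3$ and use the equality constraint ${\bf Q}(\bar{\bf C}+\bar{\bf E})=\check{\bf D}+\check{\bf N}$ to eliminate one of the two perturbation matrices. Writing $\check{\bf N}={\bf Q}(\bar{\bf C}+\bar{\bf E})-\check{\bf D}={\bf Q}\bar{\bf E}+({\bf Q}\bar{\bf C}-\check{\bf D})$, the objective becomes ${\|\bar{\bf E}\|}_F^2+{\|{\bf Q}\bar{\bf E}+({\bf Q}\bar{\bf C}-\check{\bf D})\|}_F^2$, an unconstrained convex quadratic in $\bar{\bf E}$. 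Expanding the second term and using ${\bf Q}^T{\bf Q}={\bf I}_3$ (so ${\|{\bf Q}\bar{\bf E}\|}_F^2={\|\bar{\bf E}\|}_F^2$ and cross terms simplify via $\mathrm{tr}(\bar{\bf E}^T{\bf Q}^T(\cdot))$), I would set the gradient with respect to $\bar{\bf E}$ to zero. This yields the minimizer $\bar{\bf E}^\star=-\tfrac12{\bf Q}^T({\bf Q}\bar{\bf C}-\check{\bf D})$, and substituting back gives an optimal objective value of $\tfrac12{\|{\bf Q}\bar{\bf C}-\check{\bf D}\|}_F^2$ — proportional to the SUC-LS objective, hence minimized by the same $\bf Q$.

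The main obstacle — really the only subtle point — is bookkeeping with the unitary constraint while doing the inner minimization: one must be careful that ${\bf Q}$ is $3\times 3$ unitary but acts on the left of a non-square matrix, so ${\bf Q}{\bf Q}^T={\bf I}_3$ is available (Stiefel $\mathcal{V}_{3,3}$ is the full orthogonal group here), which is what makes the cross terms collapse cleanly; if $\bf Q$ were a tall Stiefel matrix the argument would need the projector ${\bf Q}{\bf Q}^T$ and would not reduce so neatly. I would also note that the inner problem is strictly convex in $\bar{\bf E}$, so the stationary point is the unique global minimizer, and that the factor $\tfrac12$ is immaterial to the argmin. Finally, I would remark that the uniqueness claim for $\hat{\bf Q}_{SUC-TLS}$ inherits verbatim from Theorem~1: the solution is unique if and only if $\bar{\bf C}\check{\bf D}^T$ is non-singular. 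For the detailed algebra I would simply cite~\cite{ArunCTLS}, where the equivalence between constrained TLS and constrained LS for the rigid body transformation is worked out.
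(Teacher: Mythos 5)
Your proposal is correct and follows essentially the same route as the paper: for fixed unitary ${\bf Q}$ you minimize ${\|\bar{\bf E}\|}_F^2+{\|\check{\bf N}\|}_F^2$ subject to the linear constraint (the paper does this via the minimum-norm solution obtained from the right-inverse of $\left[\begin{array}{c|c}{\bf Q} & -{\bf I}\end{array}\right]$, you do it by eliminating $\check{\bf N}$ and solving an unconstrained quadratic in $\bar{\bf E}$), and your concentrated objective $\tfrac12{\|{\bf Q}\bar{\bf C}-\check{\bf D}\|}_F^2$ coincides with the paper's $\tfrac12{\|\bar{\bf C}\|}_F^2-\mathrm{tr}({\bf Q}\bar{\bf C}\check{\bf D}^T)+\tfrac12{\|\check{\bf D}\|}_F^2$, so both reduce to maximizing $\mathrm{tr}({\bf Q}\bar{\bf C}\check{\bf D}^T)$ and hence to the SUC-LS solution ${\bf V}{\bf U}^T$. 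Your remark that ${\bf Q}{\bf Q}^T={\bf I}_3$ is needed (and available, since $\mathcal{V}_{3,3}$ is the full orthogonal group) is exactly the same use of the unitary constraint made implicitly in the paper's right-inverse computation.
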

\begin{proof}
For any ${\bf Q}$, we can re-write the constraint in (\ref{eq:TLSoptproblem}) as
\begin{equation*}
 \label{eq:TLSproof1}
 \begin{aligned}
 \left[\begin{array}{c|c}{\bf Q} & -{\bf I}\end{array}\right]  \left[\begin{array}{c}\bar{\bf E} \\ \check{\bf N}\end{array}\right] = -  \left[\begin{array}{c|c}{\bf Q} & -{\bf I}\end{array}\right]  \left[\begin{array}{c}\bar{\bf C} \\ \check{\bf D}\end{array}\right].
\end{aligned}
\end{equation*}
Using the unitary constraint on ${\bf Q}$, and right-inverting the wide matrix $\left[\begin{array}{c|c}{\bf Q} & -{\bf I}\end{array}\right]$ we get
\begin{equation}
 \label{eq:TLSproof1}
 \begin{aligned}
\left[\begin{array}{c}\bar{\bf E} \\ \check{\bf N}\end{array}\right] &= - \frac{1}{2}  \left[\begin{array}{c}{\bf Q}^T \\ -{\bf I}\end{array}\right]   \left[\begin{array}{c|c}{\bf Q} & -{\bf I}\end{array}\right]  \left[\begin{array}{c}\bar{\bf C} \\ \check{\bf D}\end{array}\right] \\
&= - \frac{1}{2} \left[\begin{array}{cc}{\bf I} &-{\bf Q}^T \\-{\bf Q} & {\bf I}\end{array}\right] \left[\begin{array}{c}\bar{\bf C} \\ \check{\bf D}\end{array}\right] \\ &= -\frac{1}{2}\left[\begin{array}{c}\bar{\bf C}-{\bf Q}^T\check{\bf D} \\ \check{\bf D} - {\bf Q}\bar{\bf C}\end{array}\right]
\end{aligned}
\end{equation}
We can now re-write the objective in (\ref{eq:TLSoptproblem}) to compute the minimum-norm square solution as 
\begin{equation}
 \label{eq:TLSproof2}
 \begin{aligned}
&\mathrm{tr}\left(\left[\begin{array}{c|c}\bar{\bf E}^T & \check{\bf N}^T\end{array}\right] \left[\begin{array}{c}\bar{\bf E} \\ \check{\bf N}\end{array}\right]\right)\\
 &= \mathrm{tr}(\frac{1}{2} (\bar{\bf C}^T\bar{\bf C} - \check{\bf D}^T{\bf Q}\bar{\bf C} - \bar{\bf C}^T{\bf Q}^T\check{\bf D} + \check{\bf D}^T\check{\bf D})) \\
 &= \frac{1}{2}{\|\bar{\bf C}\|}_F^2 - \mathrm{tr}({\bf Q}\bar{\bf C}\check{\bf D}^T) +  \frac{1}{2}{\|\check{\bf D}\|}_F^2.
\end{aligned}
\end{equation}  
The solution to the UC-TLS problem is then obtained by optimizing the term depending only on ${\bf Q}$, i.e., by maximizing $\mathrm{tr}({\bf Q}\bar{\bf C}\check{\bf D}^T)$. This is the same cost  as that of the SUC-LS problem in \eqref{eq:CLSprob}. Hence, the solution to the unitarily constrained TLS problem is 
\begin{equation}
 \label{eq:CTLS}
 \begin{aligned}
\hat{\bf Q}_{SUC-TLS} =  \hat{\bf Q}_{SUC-LS}= {\bf V} {\bf U}^T
\end{aligned}
\end{equation}
where the matrices ${\bf U}$ and ${\bf V}$ are again obtained by computing the SVD of $\bar{\bf C}\check{\bf D}^T: \bar{\bf C}\check{\bf D}^T= {\bf U}{\bf \Sigma}{\bf V}^T$.
\end{proof}

The algorithms to compute the SUC-LS and SUC-TLS estimators are summarized as Algorithm \ref{alg:CTLS}.
\begin{algorithm}[!t]
\caption{Summary of SUC-LS or SUC-TLS estimators}
\label{alg:CTLS}
\begin{algorithmic}
\item[1.] \textbf{Given} $\bar{\bf C}$ and measurements $\check{\bf D}$. 
\item[2.] \textbf{compute} $\bar{\bf C}\check{\bf D}^T$. \\
\item[3.] \textbf{compute} SVD of $\bar{\bf C}\check{\bf D}^T$:  $\bar{\bf C}\check{\bf D}^T={\bf U}{\bf \Sigma}{\bf V}^T$.
\item[4.] $\hat{\bf Q}_{SUC-LS}= \hat{\bf Q}_{SUC-TLS} = {\bf V} {\bf U}^T$.
\item[5.] $\hat{\bf t}_{SUC-LS}= \hat{\bf t}_{SUC-TLS} = \frac{1}{N}(\bar{\bf A}^\dag\bar{\bf D}-\hat{\bf Q}_{SUC-LS}{\bf C}){\bf 1}_N$.
\end{algorithmic}
\end{algorithm} 


\subsection{Optimal unitarily constrained TLS estimator (OUC-TLS)}

Similar to the OUC-LS formulation, the TLS estimator can be derived without pseudo-inverting the matrix $\bar{\bf A}$ in \eqref{eq:TLSdatamodel}. The data model taking into account the error in the known sensor topology is then given by
\begin{equation}
\bar{\bf A}{\bf Q}(\bar{\bf C} + \bar{\bf E})  = \tilde{\bf D} +  \tilde{\bf  N}.\label{eq:iterTLSmodel}
\end{equation}

The optimal unitarily constrained TLS (OUC-TLS) optimization problem is given by
\begin{equation}
 \label{eq:optCTLSprob}
 \begin{aligned}
& \argmin_{{\bf Q}} \quad {\|\bar{\bf E}\|}^2_F + {\|\tilde{\bf  N}\|}^2_F, \\
&s.t. \quad  \bar{\bf A}{\bf Q}(\bar{\bf C} + \bar{\bf E})  = \tilde{\bf D} +  \tilde{\bf  N}, \quad \text{and} \quad {\bf Q}^T{\bf Q} = {\bf I}_3.
\end{aligned}
\end{equation} 

 \begin{mytheo}[Solution to OUC-TLS] 
The optimal unitarily constrained TLS problem in \eqref{eq:optCTLSprob} has the same solution as a specifically weighted OUC-LS, i.e., it is the solution to
\begin{equation}
\label{eq:optCTLSprob1}
\begin{aligned}
{\hat{\bf Q}}_{OUC-TLS} =& \; \argmin_{{\bf Q}}  \, {\| {\boldsymbol \Lambda}^{-1/2}(\bar{\bf A}{\bf Q}\bar{\bf C} - \tilde{\bf D}) \|}_F^2 \\
&s.t. \quad {\bf Q}^T{\bf Q} = {\bf I}_3
\end{aligned}
\end{equation} 
where ${\boldsymbol \Lambda} = (\bar{\bf A}\bar{\bf A}^T + {\bf I}_{M-1}) \in \mathbb{R}^{(M-1) \times (M-1)}$ is a weighting matrix.
\end{mytheo}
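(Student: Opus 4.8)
The plan is to eliminate the perturbation variables $\bar{\bf E}$ and $\tilde{\bf N}$ from the constrained optimization in \eqref{eq:optCTLSprob} by solving for them explicitly in terms of ${\bf Q}$, exactly as in the proof of the SUC-TLS theorem, and then to recognize the resulting ${\bf Q}$-only objective as a weighted orthogonal Procrustes cost. First I would fix an arbitrary feasible ${\bf Q}$ (i.e.\ ${\bf Q}^T{\bf Q} = {\bf I}_3$) and rewrite the equality constraint $\bar{\bf A}{\bf Q}(\bar{\bf C}+\bar{\bf E}) = \tilde{\bf D} + \tilde{\bf N}$ as a single linear equation in the stacked unknown $[\bar{\bf E}^T \;|\; \tilde{\bf N}^T]^T$, of the form
\begin{equation*}
\left[\begin{array}{c|c}\bar{\bf A}{\bf Q} & -{\bf I}_{M-1}\end{array}\right]\left[\begin{array}{c}\bar{\bf E}\\ \tilde{\bf N}\end{array}\right] = \tilde{\bf D} - \bar{\bf A}{\bf Q}\bar{\bf C}.
\end{equation*}
Since this is underdetermined and the objective ${\|\bar{\bf E}\|}_F^2 + {\|\tilde{\bf N}\|}_F^2$ is precisely the squared Frobenius norm of the stacked unknown, the optimal $\bar{\bf E}, \tilde{\bf N}$ for a given ${\bf Q}$ is the minimum-norm solution, obtained by right-inverting the wide matrix $\left[\begin{array}{c|c}\bar{\bf A}{\bf Q} & -{\bf I}_{M-1}\end{array}\right]$.

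The key computation is then the right-inverse itself: because ${\bf Q}{\bf Q}^T = {\bf I}_3$, we have $(\bar{\bf A}{\bf Q})(\bar{\bf A}{\bf Q})^T = \bar{\bf A}\bar{\bf A}^T$, so
\begin{equation*}
\left[\begin{array}{c|c}\bar{\bf A}{\bf Q} & -{\bf I}_{M-1}\end{array}\right]\left[\begin{array}{c|c}\bar{\bf A}{\bf Q} & -{\bf I}_{M-1}\end{array}\right]^T = \bar{\bf A}\bar{\bf A}^T + {\bf I}_{M-1} = {\boldsymbol\Lambda},
\end{equation*}
which is exactly the weighting matrix in the statement and is independent of ${\bf Q}$. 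Substituting the minimum-norm solution back into the objective, the optimal value for a given ${\bf Q}$ becomes
\begin{equation*}
\mathrm{tr}\!\left((\tilde{\bf D} - \bar{\bf A}{\bf Q}\bar{\bf C})^T {\boldsymbol\Lambda}^{-1} (\tilde{\bf D} - \bar{\bf A}{\bf Q}\bar{\bf C})\right) = {\|{\boldsymbol\Lambda}^{-1/2}(\bar{\bf A}{\bf Q}\bar{\bf C} - \tilde{\bf D})\|}_F^2,
\end{equation*}
so that minimizing over the feasible ${\bf Q}$ reduces \eqref{eq:optCTLSprob} to \eqref{eq:optCTLSprob1}, as claimed. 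I would then note that, unlike the SUC-TLS case, the factor $\bar{\bf C}$ on the right of ${\bf Q}$ prevents the trace cross-term from collapsing to a pure Procrustes form, so no closed form results and one falls back on the Newton iteration of Algorithm~\ref{alg:newton} applied with $f({\bf Q}) := (\bar{\bf C}^T \otimes {\boldsymbol\Lambda}^{-1/2}\bar{\bf A})\mathrm{vec}({\bf Q})$ and ${\bf b} := \mathrm{vec}({\boldsymbol\Lambda}^{-1/2}\tilde{\bf D})$.

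The main obstacle is the step of right-inverting $\left[\begin{array}{c|c}\bar{\bf A}{\bf Q} & -{\bf I}_{M-1}\end{array}\right]$ and carefully tracking that the resulting expression for $\bar{\bf E}$ and $\tilde{\bf N}$, when its Frobenius norm is expanded, yields exactly the ${\boldsymbol\Lambda}$-weighted residual and nothing else; one has to use ${\bf Q}{\bf Q}^T = {\bf I}_3$ (not merely ${\bf Q}^T{\bf Q} = {\bf I}_3$) at the right place, and confirm that no ${\bf Q}$-dependence sneaks into ${\boldsymbol\Lambda}$. A secondary point worth checking is that $\bar{\bf A}\bar{\bf A}^T + {\bf I}_{M-1}$ is always positive definite (it is, being a sum of a PSD matrix and the identity), so ${\boldsymbol\Lambda}^{-1/2}$ is well defined regardless of whether $\bar{\bf A}$ has full column rank, which is a mild advantage of this formulation over SUC-TLS.
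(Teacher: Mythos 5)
Your proposal is correct and follows essentially the same route as the paper's proof: rewrite the constraint as an underdetermined linear system in the stacked unknown $[\bar{\bf E}^T \,|\, \tilde{\bf N}^T]^T$, observe via ${\bf Q}{\bf Q}^T={\bf I}_3$ that $\left[\begin{array}{c|c}\bar{\bf A}{\bf Q} & -{\bf I}\end{array}\right]\left[\begin{array}{c|c}\bar{\bf A}{\bf Q} & -{\bf I}\end{array}\right]^T = \bar{\bf A}\bar{\bf A}^T+{\bf I}_{M-1}={\boldsymbol\Lambda}$ independently of ${\bf Q}$, take the minimum-norm (right-inverse) solution, and substitute to obtain the ${\boldsymbol\Lambda}^{-1}$-weighted residual, which is exactly the paper's argument. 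Your added remarks (positive definiteness of ${\boldsymbol\Lambda}$, no closed form hence Newton iteration) are consistent with the paper.
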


\begin{IEEEproof}
For any ${\bf Q}$ the constraint in the optimization problem \eqref{eq:optCTLSprob} can be written as
\begin{equation}
 \label{eq:optCTLSproof1}
 \begin{aligned}
 \left[\begin{array}{c|c}\bar{\bf A}{\bf Q} & -{\bf I}\end{array}\right]  \left[\begin{array}{c}\bar{\bf E} \\ \tilde{\bf N}\end{array}\right] = -  \left[\begin{array}{c|c}\bar{\bf A}{\bf Q} & -{\bf I}\end{array}\right]  \left[\begin{array}{c}\bar{\bf C} \\ \tilde{\bf D}\end{array}\right]. \\ 
\end{aligned}
\end{equation}
Multiplying both sides of \eqref{eq:optCTLSproof1} with the right-inverse of the wide-matrix $[\bar{\bf A}{\bf Q} | -{\bf I}]$ given by
\begin{equation}
 \label{eq:optCTLSproof2}
 \begin{aligned}
\left[\begin{array}{c|c}\bar{\bf A}{\bf Q} & -{\bf I}\end{array}\right]^{\dag}  = \left[\begin{array}{c}{\bf Q}\bar{\bf A}^T \\ -{\bf I}\end{array}\right] {(\bar{\bf A}\bar{\bf A}^T + {\bf I}_{M-1})}^{-1},
\end{aligned} 
\end{equation}
we get
\begin{equation}
 \label{eq:optCTLSproof3}
 \begin{aligned}
 \left[\begin{array}{c}\bar{\bf E} \\ \tilde{\bf N}\end{array}\right] = -  \left[\begin{array}{c}{\bf Q}\bar{\bf A}^T \\ -{\bf I}\end{array}\right] {(\bar{\bf A}\bar{\bf A}^T + {\bf I}_{M-1})}^{-1} \\ \left[\begin{array}{c|c}\bar{\bf A}{\bf Q} & -{\bf I}\end{array}\right]  \left[\begin{array}{c}\bar{\bf C} \\ \tilde{\bf D}\end{array}\right]. 
\end{aligned}
\end{equation}
We can now re-write the objective in (\ref{eq:optCTLSprob}) and further simplify it to compute the minimum-norm square solution as 
\begin{equation*}
 \label{eq:TLSproof2}
 \begin{aligned}
&\mathrm{tr}\left(\left[\begin{array}{c|c}\bar{\bf E}^T & \tilde{\bf N}^T\end{array}\right] \left[\begin{array}{c}\bar{\bf E} \\ \tilde{\bf N}\end{array}\right]\right)
 = \mathrm{tr} (\left[\begin{array}{c|c}\bar{\bf C}^T & \tilde{\bf D}^T\end{array}\right]  \left[\begin{array}{c}{\bf Q}\bar{\bf A}^T \\ -{\bf I}\end{array}\right]  \\
 &\hspace*{20mm} {(\bar{\bf A}\bar{\bf A}^T + {\bf I}_{M-1})}^{-1}\left[\begin{array}{c|c}\bar{\bf A}{\bf Q} & -{\bf I}\end{array}\right]  \left[\begin{array}{c}\bar{\bf C} \\ \tilde{\bf D}\end{array}\right])\\
&\hspace*{25mm}= {\|{\boldsymbol \Lambda}^{-1/2}(\bar{\bf A}{\bf Q}\bar{\bf C} - \tilde{\bf D})\|_F^2}
\end{aligned}
\end{equation*}  
where  ${\boldsymbol \Lambda} = (\bar{\bf A}\bar{\bf A}^T + {\bf I}_{M-1})$.
Hence, the solution to the optimization problem \eqref{eq:optCTLSprob} is equivalent to the weighted OUC-LS of \eqref{eq:optCTLSprob1}.
\end{IEEEproof}

The optimization problem \eqref{eq:optCTLSprob1} does not have a closed-form solution, and has to be solved iteratively using for instance Newton's method (summarized in Algorithm~\ref{alg:newton}) with
\begin{equation}
\begin{aligned}
f({\bf Q}) &:= (\bar{\bf C}^T \otimes {\boldsymbol \Lambda}^{-1/2}\bar{\bf A}) \mathrm{vec}({\bf Q}) \in \mathbb{R}^{K \times 1},\\
\text{and} \qquad{\bf b} &:= \mathrm{vec}({\boldsymbol \Lambda}^{-1/2} \tilde{\bf D}) \in \mathbb{R}^{K \times 1}.
\end{aligned}
\end{equation}

\section{Unitarily constrained Cram\'er-Rao bound}\label{sec:CRB}
Suppose we want to estimate the vector ${\bf q}_e = [{\bf q}_1^T, {\bf q}_2^T, {\bf q}_3^T, {\bf t}^T]^T \in \mathbb{R}^{12 \times 1}$ from the measurement vector
\begin{equation}
\label{eq:crbvec}
\bar{\bf d}= ({\bf C}_e^T \otimes \bar{\bf A}) {{\bf q}_e}  + \bar{\bf n} 
\end{equation}
corrupted by noise $\bar{\bf n}$. Assume that the probability density function (PDF) $p(\bar{\bf d}; {\bf q}_e)$ of the sample vectors parameterized by the unknown vector ${\bf q}_e$ is known. The covariance matrix of any unbiased estimate of the parameter vector ${\bf q}_e$ then satisfies~\cite{SKayestimation}
\begin{equation}
\begin{aligned}
\label{eq:crbtheorem}
\mathbb{E}\{(\hat{\bf q}_e- {\bf q}_e)(\hat{\bf q}_e-{\bf q}_e)^T\} \geq {\bf C}_{CRB}({\bf q}_e) = {\bf F}^{-1}
\end{aligned}
\end{equation}
where the entries of the Fisher information matrix (FIM) ${\bf F}$ are given by
\begin{equation*}
\begin{aligned}
\label{eq:fim}
{F}_{ij} = - \mathbb{E} \left\{\frac{\partial^2 \ln p(\bar{\bf d}; {\bf q}_e)}{\partial {q}_{ei}\partial{q}_{ej}}\right\}.
\end{aligned}
\end{equation*}
This is the Cram\'er-Rao bound theorem and ${\bf C}_{CRB}$ is the Cram\'er-Rao lower bound (CRB). 

The computation of the CRB is straightforward when the noise $\bar{\bf n}$, and hence the PDF $p(\bar{\bf d}; {\bf q}_e)$ can be described by a Gaussian process. Since the noise vector $\bar{\bf n}$ is zero-mean with covariance matrix  equal to an identity matrix, the FIM can be computed using the Jacobian matrix ${\bf J}$, and is given by ${\bf F} = {\bf J}^T{\bf J} \in \mathbb{R}^{12 \times 12}$, where the Jacobian matrix is
\begin{equation*}
\begin{aligned}
\label{eq:jacobian}
{\bf J} &= \frac{\partial(\bar{\bf d} - ({\bf C}_e^T \otimes \bar{\bf A}) {{\bf q}_e})}{\partial {\bf q}_e^T} = [{\bf J}_{\bf Q} \; | \; {\bf J}_{\bf t}]  \in \mathbb{R}^{(M-1)N \times 12},\\
\end{aligned}
\end{equation*} with ${\bf J}_{\bf Q} = {\bf C}^T \otimes \bar{\bf A}$ and ${\bf J}_{\bf t} = \bar{\bf A}$. 
The FIM can then be computed as follows
\begin{equation}
\begin{aligned}
\label{eq:unfim}
{\bf F} = \left[\begin{array}{c|c}{\bf C}{\bf C}^T \otimes \bar{\bf A}^T\bar{\bf A} & ({\bf C} \otimes \bar{\bf A}^T)\bar{\bf A}\\ \hline \strut \strut \bar{\bf A}^T({\bf C}^T \otimes \bar{\bf A}) & \bar{\bf A}^T\bar{\bf A}\end{array}\right].
\end{aligned}
\end{equation}

However, note that in \eqref{eq:unfim}, the FIM does not take into account the unitary constraint on the matrix ${\bf Q}$, i.e., ${\bf Q}^T{\bf Q} = {\bf I}$. Generally, if the parameter vector ${\bf q}_e$ is subject to $K$ continuously differentiable constraints ${\bf g}({\bf q}_e) = {\bf 0}$, then with these constraints, the resulting constrained CRB is lower than the unconstrained CRB. In~\cite{constrainedCRB1}, it is shown that the constrained CRB (C-CRB) has the form
\begin{equation}
\begin{aligned}
\label{eq:constraincrb}
{\bf C}_{C-CRB}({\bf q}_e)=\mathbb{E}\{(\hat{\bf q}_e- {\bf q}_e)(\hat{\bf q}_e-{\bf q}_e)^T\}  \geq {\bf U}({\bf U}^T{\bf F} {\bf U})^{-1}{\bf U},
\end{aligned}
\end{equation}
where ${\bf F}$ is the FIM for the unconstrained estimation problem as in \eqref{eq:unfim}, and the unitary matrix ${\bf U} \in \mathbb{R}^{12 \times (12-K)}$ is obtained by collecting  orthonormal basis vectors of the null-space of the gradient matrix
\begin{equation}
\begin{aligned}
\label{eq:gradientmatrix}
{\bf G} ({\bf q}_e) = \frac{\partial \bar{\bf g}({\bf q}_e)}{\partial {\bf q}_e^T} \in \mathbb{R}^{K \times 12},
\end{aligned}
\end{equation}
where the constraints $\bar{\bf g}({\bf q}_e) = {\bf 0}$ are obtained by discarding the redundant constraints (if any) from ${\bf g}({\bf q}_e) = {\bf 0}$. This ensures that the matrix ${\bf G} ({\bf q}_e)$ is full row-rank, and implies ${\bf G} ({\bf q}_e) {\bf U} = {\bf 0}$ while ${\bf U}^T{\bf U} = {\bf I}$. For the unitarily constrained CRB (UC-CRB) denoted by ${\bf C}_{UC-CRB} ({\bf q}_e)$, we have to consider the unitary constraint ${\bf Q}^T{\bf Q} = {\bf I}$, which can be written by the following parametric constraints as
\begin{equation}
\begin{aligned}
\label{eq:constraints}
{\bf g}({\bf q}_e) = &[{\bf q}_1^T{\bf q}_1-1, {\bf q}_2^T{\bf q}_1, {\bf q}_3^T{\bf q}_1,{\bf q}_1^T{\bf q}_2, {\bf q}_2^T{\bf q}_2-1, \\
&{\bf q}_3^T{\bf q}_2,{\bf q}_1^T{\bf q}_3,{\bf q}_2^T{\bf q}_3,{\bf q}_3^T{\bf q}_3-1]^T = {\bf 0} \in \mathbb{R}^{9 \times 1}.
\end{aligned}
\end{equation}
The orthogonality constraints are symmetric, i.e., ${\bf q}_i^T{\bf q}_j ={\bf q}_j^T{\bf q}_i, i,j=1,2,3$, and hence, they are redundant. The non-redundant constraints are thus given by
\begin{equation}
\begin{aligned}
\label{eq:nonredundant}
\bar{\bf g}({\bf q}_e) = &[{\bf q}_1^T{\bf q}_1-1, {\bf q}_2^T{\bf q}_1, {\bf q}_3^T{\bf q}_1, {\bf q}_2^T{\bf q}_2-1,\\
 &{\bf q}_3^T{\bf q}_2,{\bf q}_3^T{\bf q}_3-1]^T = {\bf 0} \in \mathbb{R}^{6 \times 1}.
\end{aligned}
\end{equation}
The gradient matrix for the $K=6$ non-redundant constraints in \eqref{eq:nonredundant} can be computed as follows  
\begin{equation}
\begin{aligned}
\label{eq:gradientmatrix}
{\bf G} ({\bf q}_e) &= \frac{\partial \bar{\bf g}({\bf q}_e)}{\partial {\bf q}_e^T}  \\
&=\left[\begin{array}{cccc}2{\bf q}_1^T & {\bf 0}_3^T & {\bf 0}_3^T & {\bf 0}_3^T \\{\bf q}_2^T & {\bf q}_1^T & {\bf 0}_3^T & {\bf 0}_3^T \\{\bf q}_3^T & {\bf 0}_3^T & {\bf q}_1^T & {\bf 0}_3^T \\{\bf 0}_3^T & 2{\bf q}_2^T & {\bf 0}_3^T & {\bf 0}_3^T \\{\bf 0}_3^T & {\bf q}_3^T & {\bf q}_2^T & {\bf 0}_3^T \\{\bf 0}_3^T & {\bf 0}_3^T & 2{\bf q}_3^T & {\bf 0}_3^T\end{array}\right] \in \mathbb{R}^{6 \times 12}.
\end{aligned}
\end{equation}
An orthonormal basis of the null-space of the gradient matrix is finally given by
\begin{equation}
\begin{aligned}
\label{eq:gradientmatrix}
{\bf U} &= \frac{1}{\sqrt{2}}\left[\begin{array}{ccc|c}-{\bf q}_3 & {\bf 0}_3 & {\bf q}_2 \\{\bf 0}_3 & -{\bf q}_3 & -{\bf q}_1&{\bf 0}_{3 \times 3} \\{\bf q}_1 & {\bf q}_2 & {\bf 0}_3 \\ \hline & {\bf 0}_{3 \times 3}& & \sqrt{2} \,{\bf I}_3\end{array}\right].
\end{aligned}
\end{equation}

\begin{mylem}[Biased estimator]\label{lem:bias}
An unbiased constrained estimator for {\bf Q} does not exist, except for the noiseless case.
\end{mylem}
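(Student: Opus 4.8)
Suppose, for contradiction, that a constrained unbiased estimator $\hat{\mathbf{Q}} = [\hat{\mathbf{q}}_1, \hat{\mathbf{q}}_2, \hat{\mathbf{q}}_3] = g(\bar{\mathbf{d}})$ exists. The plan is to exploit the boundedness of $\mathcal{V}_{3,3}$: feasibility forces $\hat{\mathbf{Q}}^T\hat{\mathbf{Q}} = \mathbf{I}_3$ for \emph{every} realization of $\bar{\mathbf{d}}$, so in particular $\|\hat{\mathbf{q}}_i\|_2 = 1$ almost surely, while unbiasedness gives $\mathbb{E}\{\hat{\mathbf{q}}_i\} = \mathbf{q}_i$ with $\|\mathbf{q}_i\|_2 = 1$. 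The first step is to apply the triangle inequality in expectation, column by column: $1 = \|\mathbf{q}_i\|_2 = \|\mathbb{E}\{\hat{\mathbf{q}}_i\}\|_2 \le \mathbb{E}\{\|\hat{\mathbf{q}}_i\|_2\} = 1$, so equality holds throughout. Since $\mathbf{q}_i^T \hat{\mathbf{q}}_i \le \|\hat{\mathbf{q}}_i\|_2$ pointwise, the chain $1 = \mathbf{q}_i^T \mathbb{E}\{\hat{\mathbf{q}}_i\} = \mathbb{E}\{\mathbf{q}_i^T \hat{\mathbf{q}}_i\} \le \mathbb{E}\{\|\hat{\mathbf{q}}_i\|_2\} = 1$ forces $\mathbf{q}_i^T \hat{\mathbf{q}}_i = \|\hat{\mathbf{q}}_i\|_2 = 1$ almost surely, which pins $\hat{\mathbf{q}}_i$ down to the single unit vector $\mathbf{q}_i$. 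Doing this for $i = 1,2,3$ yields $\hat{\mathbf{Q}} = \mathbf{Q}$ with probability one, for the true $\mathbf{Q}$; that is, the estimator must reproduce $\mathbf{Q}$ exactly.

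The second step is to rule this out in the noisy case using the full support of the measurement density. Since $\bar{\mathbf{n}}$ is Gaussian with identity covariance (as used in the CRB derivation), $\bar{\mathbf{d}}$ has a density that is strictly positive on all of $\mathbb{R}^{(M-1)N}$ under every feasible parameter, so $\hat{\mathbf{Q}} = \mathbf{Q}$ a.s. is equivalent to $g(\bar{\mathbf{d}}) = \mathbf{Q}$ for Lebesgue-almost every $\bar{\mathbf{d}}$. Picking two distinct feasible points, e.g. $(\mathbf{Q}^{(1)}, \mathbf{t})$ and $(\mathbf{Q}^{(2)}, \mathbf{t})$ with $\mathbf{Q}^{(1)} \neq \mathbf{Q}^{(2)}$ in $\mathcal{V}_{3,3}$ (which is a continuum), gives two disjoint sets $\{\bar{\mathbf{d}} : g(\bar{\mathbf{d}}) = \mathbf{Q}^{(1)}\}$ and $\{\bar{\mathbf{d}} : g(\bar{\mathbf{d}}) = \mathbf{Q}^{(2)}\}$, each of full Lebesgue measure — impossible. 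This contradiction establishes nonexistence of a constrained unbiased estimator whenever noise is present. For the exceptional noiseless case, $\bar{\mathbf{d}} = (\mathbf{C}_e^T \otimes \bar{\mathbf{A}})\mathbf{q}_e$ exactly, and under the rank conditions already invoked (so that $\mathbf{C}_e^T \otimes \bar{\mathbf{A}}$ has full column rank) $\mathbf{q}_e$, hence $\mathbf{Q}$, is uniquely recovered; the exact-inversion estimator then trivially satisfies the unitary constraint and is unbiased.

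The main obstacle is making the Jensen-equality step airtight: one must argue that the pointwise inequality $\mathbf{q}_i^T \hat{\mathbf{q}}_i \le \|\hat{\mathbf{q}}_i\|_2$ becoming an equality \emph{in expectation} forces it to hold almost surely, and then combine this with $\|\hat{\mathbf{q}}_i\|_2 = 1$ a.s. to conclude $\hat{\mathbf{q}}_i = \mathbf{q}_i$ a.s.; care is also needed to cover randomized estimators (conditioning on the randomization reduces to the deterministic case treated above). The remaining measure-theoretic closing step is routine once the full support of the noise density is in hand.
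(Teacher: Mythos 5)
Your proof is correct and reaches the same crux as the paper's, but by a different mechanism and with one step made more rigorous. The paper argues by contradiction directly on the matrix error: writing $\hat{\bf Q} = {\bf Q} + {\boldsymbol \xi}$ and expanding the constraint $\hat{\bf Q}\hat{\bf Q}^T = {\bf I}_3$, it takes the trace and the expectation so that unbiasedness annihilates the linear terms and leaves $\mathbb{E}\{\|{\boldsymbol \xi}\|_F^2\} = 0$, i.e.\ the estimator must equal ${\bf Q}$ almost surely; the contradiction with nonzero noise is then simply asserted. You instead work column by column, using only the unit-norm part of the constraint together with a Cauchy--Schwarz/Jensen equality argument to force $\hat{\bf q}_i = {\bf q}_i$ almost surely --- a slightly more economical route, since the cross-orthogonality constraints are never needed. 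More importantly, you then close the step the paper leaves implicit: you show that no single measurable estimator can coincide with the true ${\bf Q}$ almost surely under two distinct feasible parameters, using the full support of the Gaussian measurement density, and you treat the noiseless exception explicitly via exact inversion under the full-column-rank condition on ${\bf C}_e^T \otimes \bar{\bf A}$. What the paper's trace identity buys is brevity (one display handles the whole matrix at once); what your argument buys is a weaker hypothesis (only $\|\hat{\bf q}_i\|_2 = 1$ is used) and a complete justification of why exact recovery almost surely is impossible in the presence of noise. The equality step you flag as the main obstacle is routine --- $1 - {\bf q}_i^T\hat{\bf q}_i$ is a nonnegative random variable with zero mean, hence zero almost surely, and with both vectors of unit norm this gives $\hat{\bf q}_i = {\bf q}_i$ --- so there is no gap.
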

\begin{IEEEproof}
We prove the above claim by contradiction. Let there exist an unbiased constrained estimator $\hat{\bf Q}$ such that $\hat{\bf Q} \in \mathcal{V}_{3,3}$. Then $\hat{\bf Q} = {\bf Q} + {\boldsymbol \xi}$ where ${\boldsymbol \xi}$ is the estimation error such that $\mathbb{E}\{\hat{\bf Q}\} = {\bf Q}$ or $\mathbb{E}\{{\boldsymbol \xi}\} = 0$. Since, $\hat{\bf Q} \in \mathcal{V}_{3,3}$ we have $\hat{\bf Q}\hat{\bf Q}^T = {\bf I}_3
$, and hence
\begin{equation}
\label{eq:bias}
({\bf Q} + {\boldsymbol \xi})({\bf Q} + {\boldsymbol \xi})^T = {\bf I}_3.
\end{equation} 
Using ${\bf Q}{\bf Q}^T = {\bf I}_3$ and taking expectations on both sides, \eqref{eq:bias} can be further simplified to
\begin{equation}
\label{eq:bias2}
{\mathrm{tr}(\mathbb{E}\{\boldsymbol \xi\} {\bf Q}^T)} + \mathrm{tr}({\bf Q}\mathbb{E}\{{\boldsymbol \xi}^T\}) = - \mathrm{tr}(\mathbb{E}\{{\|{\boldsymbol \xi}\|}^2\}).
\end{equation} 
Due to the assumption that $\mathbb{E}({\boldsymbol \xi}) = 0$, the right-hand side of \eqref{eq:bias2} is zero, but, the left-hand side is strictly less than zero. Hence a contradiction occurs, unless the noise is zero.
\end{IEEEproof}

However, under Gaussian noise assumptions, and due to the asymptotic properties of a maximum likelihood (ML) estimator~\cite{SKayestimation}, at large reference ranges (low noise variances), the bias tends to zero, and the OUC-LS meets the UC-CRB. A similar argument can be found in~\cite{Whiteningrotations}, but in the context of blind channel estimation.

The UC-CRB for TA-localization can be derived from the matrix ${\bf C}_{UC-CRB}$ using the transformation of parameters. The absolute position of the sensors is a linear function of the unknown parameter vector ${\bf q}_e$, and is given by ${\bf s} = \mathrm{vec}({\bf S}) = ({\bf C}_e^T \otimes {\bf I}_3){\bf q}_e$.
The proposed TA-localization estimate is given by
\begin{equation}
\hat{\bf s}_{TA} = \mathrm{vec}(\hat{\bf S}_{TA}) = ({\bf C}_e^T \otimes {\bf I}_3)\hat{\bf q}_e.
\end{equation}
Then the UC-CRB is given by~\cite{SKayestimation}
\begin{equation}
\begin{aligned}
{\bf C}_{UC-CRB}({\bf s}) &= \frac{\partial {\bf s}}{\partial {\bf q}_e} {\bf C}_{UC-CRB}({\bf q}_e)\frac{\partial {\bf s}^T}{\partial {\bf q}_e}\\
&= ({\bf C}_e^T \otimes {\bf I}_3) {\bf C}_{UC-CRB}({\bf q}_e) ({\bf C}_e \otimes {\bf I}_3).
\end{aligned}
\end{equation}

\section{Simulation results}\label{sec:sim}

We consider $N=10$ sensors mounted along the edges of a rigid body (rectangle based pyramid of size $5 (l) \times 5 (w)\times 5 (h)~\mathrm{m}$ as in Fig.~\ref{fig:model}), and $M=4$ anchors deployed uniformly at random within a range of $1~\mathrm{km}$. The rotation matrix ${\bf Q}$ is generated with rotations of $20~\mathrm{deg}$, $-25~\mathrm{deg}$, and $10~\mathrm{deg}$ in each dimension. We use a translation vector ${\bf t} = [100,100,55]~\mathrm{m}$. The simulations are averaged over $N_{exp} = 2000$ independent Monte-Carlo experiments.

The performance of the proposed estimators is analyzed in terms of the root-mean-square-error (RMSE) of the estimates $\hat{\bf Q}$ and $\hat{\bf t}$, and are respectively given as
\begin{eqnarray}
\mathrm{RMSE}({\bf Q}) 
&=& \sqrt{\frac{1}{N_{exp}} \sum_{n=1}^{N_{exp}} {\| {\bf Q} - \hat{\bf Q}^{(n)}\|}_F^2}\nonumber\\
\text{and} \quad \mathrm{RMSE}({\bf t}) 
&=& \sqrt{\frac{1}{N_{exp}} \sum_{n=1}^{N_{exp}} {\| {\bf t} - \hat{\bf t}^{(n)}\|}_2^2},\nonumber
\end{eqnarray}
where $\hat{\bf Q}^{(n)}$ and $\hat{\bf t}^{(n)}$ denote the estimates during the $n$th Monte-Carlo experiment. To analyze the performance of the orientation estimates we introduce one more metric called the mean-angular-error (MAE) which is computed using the trace inner product, and is given by
\begin{equation}
\label{eq:MAE}
\begin{aligned}
\mathrm{MAE}({\bf Q}) 
&=
\begin{cases}
\sqrt{\frac{1}{N_{exp}} \sum_{n=1}^{N_{exp}}\mathrm{tr}(\arccos({\bf Q}^T\hat{\bf Q}^{(n)})}, & \text{if } \hat{\bf Q} \in \mathcal{V}_{3,3}\\
\sqrt{\frac{1}{N_{exp}} \sum_{n=1}^{N_{exp}}\mathrm{tr}(\arccos({\bf Q}^T\hat{\bf Q}_{norm}^{(n)})}, & \text{if } \hat{\bf Q} \notin \mathcal{V}_{3,3}
\end{cases},
\end{aligned}
\end{equation}
where we normalize the columns of $\hat{\bf Q}^{(n)}$ as $\hat{\bf Q}_{norm} = [\frac{\hat{\bf q}_1}{{\|\hat{\bf q}_1\|}_2},\frac{\hat{\bf q}_2}{{\|\hat{\bf q}_2\|}_2},\frac{\hat{\bf q}_3}{{\|\hat{\bf q}_3\|}_2}]$ when $\hat{\bf Q} \notin \mathcal{V}_{3,3}$, and as earlier $\hat{\bf Q}^{(n)}$ and $\hat{\bf Q}_{norm}^{(n)}$ correspond to the estimate obtained during the $n$th Monte-Carlo experiment. The normalization is done for the estimates based on the unconstrained LS, as in this case the estimated ${\bf Q}$ matrix is not necessarily orthogonal.

Simulations are provided for different values of the {\it reference range} $\zeta$. In the considered example, the maximum range is around $700~\mathrm{m}$, hence, a reference range of $80~\mathrm{dB}$ corresponds to $\frac{700}{\sqrt{10^8}} = 0.07~\mathrm{m}$ error (standard deviation) on the range measurements.

In Fig.~\ref{fig:LSMSEQ}, the RMSE of the estimated ${\bf Q}$ matrix is illustrated for the proposed estimators when the topology of the sensors is accurately known. The unconstrained LS estimator is efficient, and meets the (unconstrained) root CRB (RCRB). However, the solution of the unconstrained LS $\hat{\bf Q}_{LS}$ need not be necessarily an orthogonal matrix. The performance of the SUC-LS estimator is similar (slightly worse) to that of the iterative OUC-LS. However, OUC-LS is efficient and meets the CRB at reasonable values of the reference range. The bias of both the SUC-LS and OUC-LS estimators is shown in Fig.~\ref{fig:bias}, and it can be seen that the bias tends to zero for $\zeta > 50~\mathrm{dB}$ (as discussed in Lemma~\ref{lem:bias}), whereas the unconstrained LS is an unbiased estimator. The bias is computed as follows
\begin{equation*}
\mathrm{Bias}({\bf Q}) = \|{\frac{1}{N_{exp}} \sum_{n=1}^{N_{exp}} \mathrm{vec}(\hat{\bf Q}^{(n)}) - \mathrm{vec}({\bf Q})}\|_2.
\end{equation*}

\begin{figure} [!t]
\centering
\includegraphics[width=3.5in]{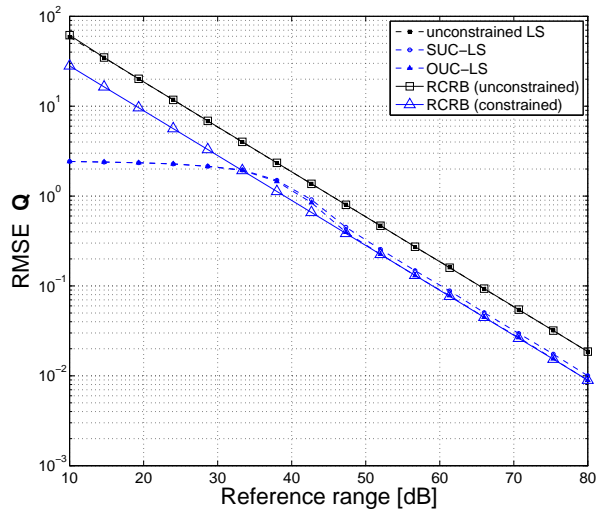}
\caption{RMSE of the estimated rotation matrix ${\bf Q}$.} 
\label{fig:LSMSEQ}
\end{figure}
\begin{figure} [!t]
\centering
\includegraphics[width=3.5in]{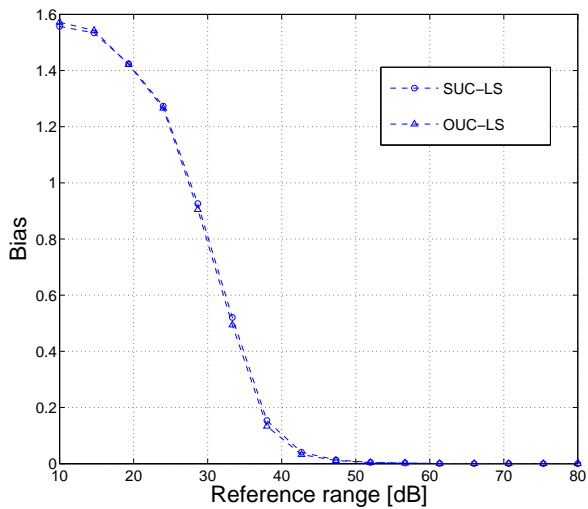}
\caption{Bias in the SUC-LS and OUC-LS estimators for ${\bf Q}$.} 
\label{fig:bias}
\end{figure}
\begin{figure} [!t]
\centering
\includegraphics[width=3.5in]{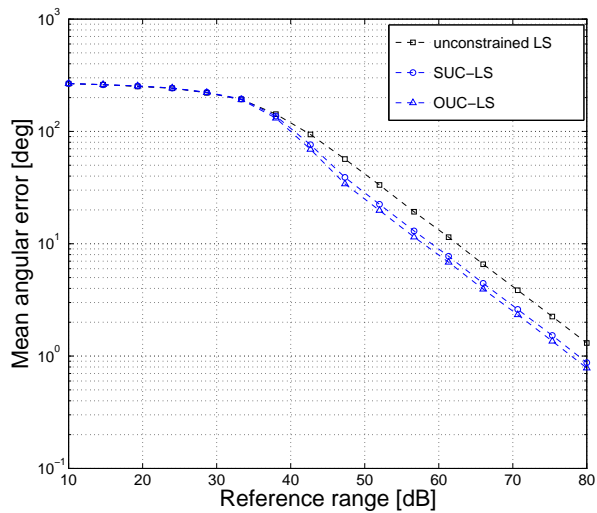}
\caption{MAE of the estimated rotation matrix ${\bf Q}$.} 
\label{fig:LSMAEQ}
\end{figure}
\begin{figure} [!t]
\centering
\includegraphics[width=3.5in]{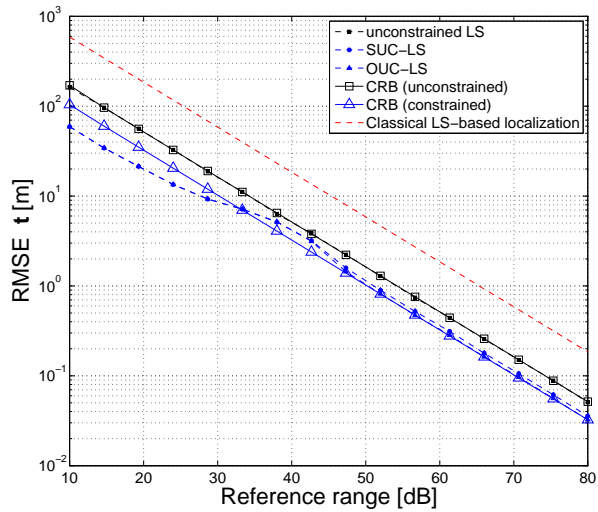}
\caption{RMSE of the estimated translation vector ${\bf t}$ along with the solution from the classical LS-based localization.} 
\label{fig:LSMSEt}
\end{figure}
\begin{figure} [!t]
\centering
\includegraphics[width=3.5in]{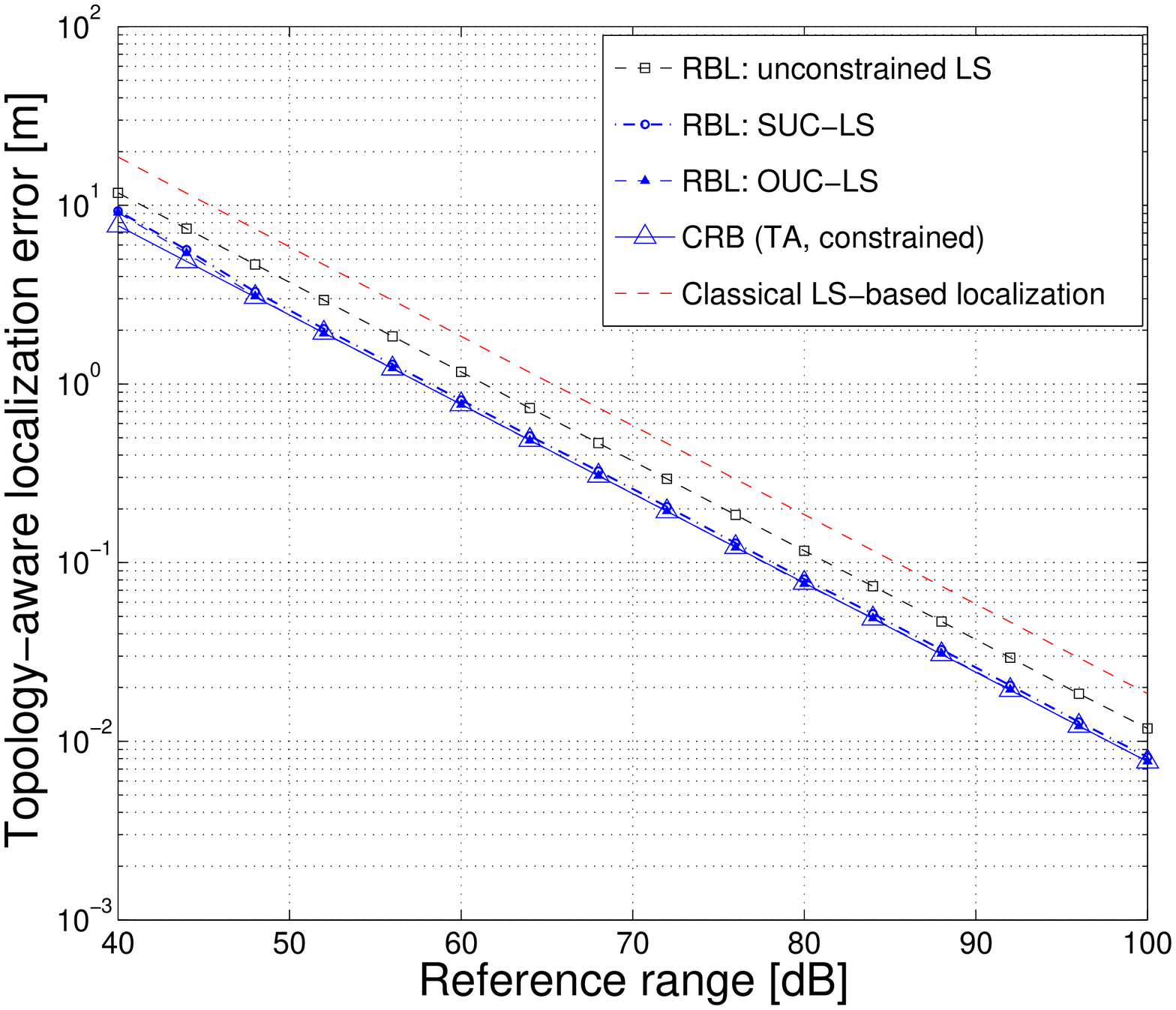}
\caption{RMSE for TA-localization.} 
\label{fig:localization}
\end{figure}

\begin{myrem}[Frobenius norm induced distance] \label{rem:Forb}
For any matrix ${\bf Q}_i$ and ${\bf Q}_j$, such that, ${\bf Q}_i \in \mathcal{V}_{n,n}$ and ${\bf Q}_j \in \mathcal{V}_{n,n}$, the Frobenius norm induced distance is always upper bounded by $\sqrt{2n}$, i.e., ${\|{\bf Q}_i - {\bf Q}_j\|}_F \leq \sqrt{{\|{\bf Q}_i \|}_F + {\| {\bf Q}_j\|}_F} = \sqrt{2n}$. 
\end{myrem}
The saturation of the RMSE in Fig.~\ref{fig:LSMSEQ} for $\zeta < 30~\mathrm{dB}$ follows from Remark~\ref{rem:Forb}, and yields a low RMSE due to the bias. However, the UC-CRB computed using \eqref{eq:constraincrb} does not saturate in this range. Fig.~\ref{fig:LSMAEQ} shows the MAE, which gives an insight in how the error on the range measurements translates to the error on the estimated rotations. For the unconstrained LS, the MAE is computed based on normalization as discussed earlier in  \eqref{eq:MAE}.

Fig.~\ref{fig:LSMSEt} shows the RMSE of the estimated translation vector for the estimators based on the accurate knowledge of ${\bf C}$. The translation vector corresponds to a single three-dimensional absolute position of the rigid body, and has a significant (close to an order of magnitude) performance improvement compared to the classical LS-based localization for the considered scenario. This is due to the error involved in estimating $N$ locations independently. The RMSE for the classical LS-based localization is computed as
\begin{equation}
\begin{aligned}
\mathrm{RMSE}({\bf S}) 
&=\sqrt{\frac{1}{N_{exp}} \sum_{n=1}^{N_{exp}} {\| {\bf S} - \hat{\bf S}_{LS}^{(n)}\|}_F^2}
\end{aligned}
\end{equation}
where $\hat{\bf S}_{LS}^{(n)}$ is the estimate during the $n$th Monte-Carlo experiment.
\begin{figure} [!t]
\centering
\includegraphics[width=3.5in]{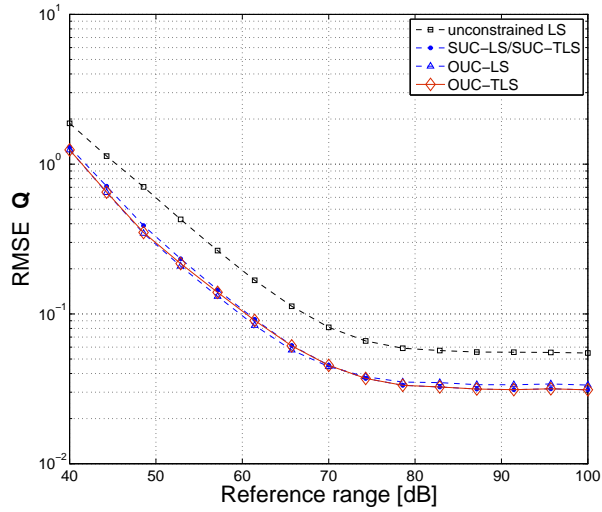}
\caption{RMSE of the estimated rotation matrix ${\bf Q}$ with perturbed ${\bf C}$.} 
\label{fig:TLSMSEQ}
\end{figure}
\begin{figure} [!t]
\centering
\includegraphics[width=3.5in]{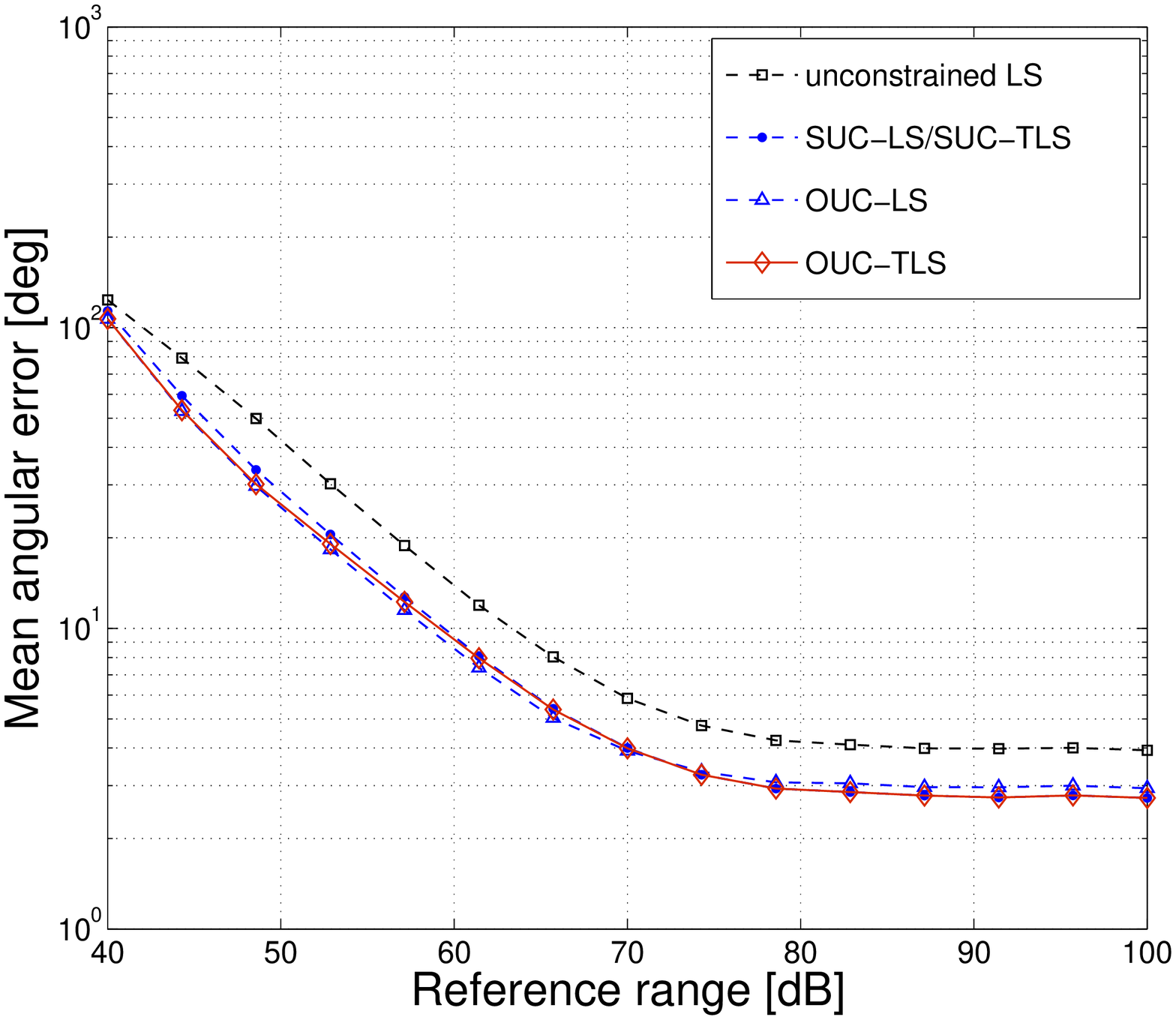}
\caption{MAE of the estimated rotation matrix ${\bf Q}$ with perturbed ${\bf C}$.} 
\label{fig:TLSMAEQ}
\end{figure}
\begin{figure} [!t]
\centering
\includegraphics[width=3.5in]{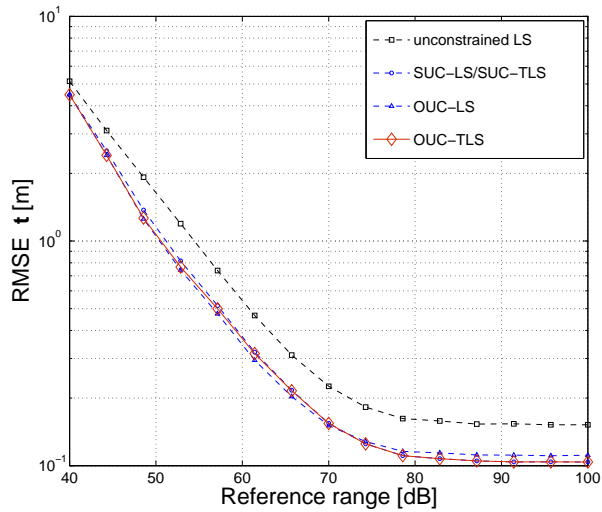}
\caption{RMSE of the estimated translation vector ${\bf t}$ with perturbed ${\bf C}$.} 
\label{fig:TLSMSEt}
\end{figure}
The locations of the sensors mounted on the rigid body can be estimated from $\hat{\bf Q}$ and $\hat{\bf t}$, which is known as TA-localization. The improvement in the localization performance of the rigid body localization algorithms as compared to the classical LS-based localization can be seen in Fig.~\ref{fig:localization}, and the improvement in the localization performance is due to the knowledge of the sensor topology.

In order to analyze the performance of the estimators for the case when the sensor topology is perturbed, we corrupt the sensor coordinates in the reference frame with a zero mean i.i.d. Gaussian random process of standard deviation $\sigma_e = 10~\mathrm{cm}$, i.e., ${\bf e}_n \sim \mathcal{N}(0, \sigma_e^2{\bf I}_3)$ for $n=1,2,\ldots,N$.

The RMSE of the estimated $\hat{\bf Q}$, $\hat{\bf t}$ using the unconstrained LS, SUC-LS/SUC-TLS, OUC-LS and OUC-TLS estimators is shown in Fig.~\ref{fig:TLSMSEQ} and Fig.~\ref{fig:TLSMSEt}, respectively. The performance of these estimators is similar to that of the LS-based estimators, except for the error floor, and this is due to the model error (perturbations on the sensor topology). The MAE for the SUC-TLS and OUC-TLS estimators is shown in Fig.~\ref{fig:TLSMAEQ}.

\section{Conclusions} \label{sec:con}
A novel framework for joint position and orientation estimation of a rigid body based on range-only measurements is proposed. We refer to this problem as rigid body localization. Sensor nodes can be mounted on the rigid bodies (e.g., satellites, robots)  during fabrication, and the geometry of how these sensors are mounted is known {\it a priori} up to a certain accuracy. However, the absolute position of the sensors or the rigid body itself is not known. Using the range measurements between the anchors and the sensors on the rigid body, as in classical localization schemes, we can estimate the position and the orientation of the body. This is equivalent to estimating a rotation matrix and a translation vector with which we parameterize the Stiefel manifold. The problem can also be viewed as localizing sensors with a manifold constraint (e.g., the sensors lie on a rigid body), and with this additional information the performance naturally improves. The constrained Cram\'er-Rao bounds are derived as a benchmark for the proposed estimators. Estimators that take into account the inaccuracies in the known sensor topology are also proposed.
\appendices
\section{Covariance matrices of error vectors} \label{app:covmatrix}
The covariance matrix of the noise $\bar{\bf n}$ can be computed as follows
\begin{equation}
\begin{aligned}
\label{eq:covnbar}
\mathbb{E}\{\bar{\bf n}\bar{\bf n}^T\} &= \mathbb{E}\{\mathrm{vec}({\bf U}_M^T {\bf W}{\bf N})\mathrm{vec}({\bf U}_M^T {\bf W}{\bf N})^T\}\\
&=\mathbb{E}[({\bf I}_N \otimes {\bf U}_M^T{\bf W}) \mathrm{vec}({\bf N})\mathrm{vec}({\bf N})^T ({\bf I}_N \otimes {\bf W}^T{\bf U}_M)] \\
&=({\bf I}_N \otimes {\bf U}_M^T{\bf W}) \mathbb{E}\{\mathrm{vec}({\bf N})\mathrm{vec}({\bf N})^T\} ({\bf I}_N \otimes {\bf W}^T{\bf U}_M)\\
&=({\bf I}_N \otimes {\bf U}_M^T{\bf W})({\bf I}_N \otimes {\boldsymbol \Sigma}) ({\bf I}_N \otimes {\bf W}^T{\bf U}_M)\\
&=({\bf I}_N \otimes {\bf U}_M^T{\bf W}{\boldsymbol \Sigma}{\bf W}^T{\bf U}_M)\\
&\approx{\bf I}_{(M-1)N},
\end{aligned}
\end{equation}
where the last approximate equality is due to the estimated pre-whitening matrix. The covariance matrix of the noise $\tilde{\bf n}$ can be computed along similar lines, and hence it is not presented here.

\section{Newton's method} \label{app:newton}
The initial point for the Newton's algorithm is computed by solving the following equality constrained LS problem~\cite{CLSQuad}
\begin{equation}
\label{eq:init1}
\begin{aligned} 
\check{\bf Q}_0 &= \argmin_{Q} {\|f({\bf Q}) -{\bf b}\|}_2^2 \\
&s.t. \quad {\|{\bf q}\|}_2 = \sqrt{3}
\end{aligned}
\end{equation} where ${\bf q} = \mathrm{vec}({\bf Q})$. Since $\check{\bf Q}_0$ does not necessarily have orthonormal columns, the OPP (similar to  \eqref{eq:CLSprob}) is solved to obtain the initial value for the Newton's method
\begin{equation}
\label{eq:init2}
\begin{aligned}
{\bf Q}_0  &= \argmin_{\bf Q} {\|{\bf Q} - \check{\bf Q}_0\|}_F^2 \quad s.t. \quad {\bf Q}^T{\bf Q} = {\bf I}_3 \\
&=( \check{\bf Q}_0 \check{\bf Q}_0^T)^{-1/2}  \check{\bf Q}_0.
\end{aligned}
\end{equation}

For an unconstrained minimization problem, the Newton's method is generally derived using a second-order Taylor series expansion of the cost function around a point. For optimizations involving unitary constraints, we can parameterize the unitary matrix ${\bf Q}$ using a matrix exponential function of a skew-symmetric matrix (sometimes also referred to as the matrix Lie algebra of $\mathcal{V}_{3,3}$~\cite{rigidbodyControlmag})
\begin{equation}
{\bf X}({\bf x}) = \left[\begin{array}{ccc}0 & -x_1 & -x_2 \\x_1 & 0 & -x_3 \\x_2 & x_3 & 0\end{array}\right] \in \mathbb{R}^{3 \times 3},
\end{equation} 
where ${\bf X} = - {\bf X}^T$, and ${\bf x}=[x_1,x_2,x_3]^T$. 

Given a point $f(\breve{\bf Q})$ on the manifold $\breve{\bf Q}$, we can represent any unitary matrix ${\bf Q}$ in the vicinity of a given unitary matrix $\breve{\bf Q}$ as
\begin{equation}
{\bf Q} = \breve{\bf Q} \, \exp({\bf X}({\bf x})).
\end{equation}

To compute the Newton or a Gauss-Newton step (a descent direction) to \eqref{eq:iterMLprob}, we then use the series expansion of the matrix exponential 
\begin{equation}
{\bf Q} = \breve{\bf Q} ({\bf I} + {\bf X} + \frac{{\bf X}^2}{2!}+ \cdots),
\end{equation}
and obtain the expansion for 
\begin{equation}
\label{eq:firstorder}
\begin{aligned}
f({\bf Q}) &= f(\breve{\bf Q}) + f(\breve{\bf Q}{\bf X}) + f(\breve{\bf Q}\frac{{\bf X}^2}{2!})+ \cdots\\
&= f(\breve{\bf Q}) + {\bf J}_Q {\bf x} + \cdots
\end{aligned}
\end{equation} around $\breve{\bf Q}$, where ${\bf J}_{Q} \in \mathbb{R}^{K \times 3}$ is the Jacobian matrix. The Jacobian matrix can be expressed as column vectors corresponding to entries of $\bf x$, i.e., $x_1,x_2,x_3$ as
\begin{equation}
\begin{aligned}
{\bf J}_{Q} = \left[\begin{array}{ccc}{\bf j}_{Q,21} & {\bf j}_{Q,31} & {\bf j}_{Q,32}\end{array}\right]
\end{aligned}
\end{equation} where the column vectors are given by ${\bf j}_{Q,ij} = f(\breve{\bf Q}({\boldsymbol \delta}_i{\boldsymbol \delta}_j^T - {\boldsymbol \delta}_j{\boldsymbol \delta}_i^T))$ for appropriate values of $i$ and $j$, and the standard unit vectors ${\boldsymbol \delta}_1, {\boldsymbol \delta}_2, {\boldsymbol \delta}_3 \in \mathbb{R}^3$. 

Using the first-order approximation \eqref{eq:firstorder} in \eqref{eq:iterMLprob}, we can compute the Gauss-Newton step for solving the optimization problem, which is given by
\begin{equation}
\label{eq:GaussNewton}
\begin{aligned}
\Delta{\bf x}_{GN} &= \min_{\bf x} \, {\| f(\breve{\bf Q}) + {\bf J}_Q{\bf x} - {\bf b}\|}_2^2 \\
&= -  {\bf J}_Q^\dag (f(\breve{\bf Q}) - {\bf b}).
\end{aligned}
\end{equation}

In order to compute the full Newton search direction, the Hessian matrix (containing the second-order derivatives) ${\bf H}_Q \in \mathbb{R}^{3 \times 3}$ is needed. The Newton search direction is given by
\begin{equation}
\label{eq:Newton}
\begin{aligned}
\Delta{\bf x}_{N} &= -  ({\bf J}_Q^T{\bf J}_Q + {\bf H}_Q)^{-1}  {\bf J}_Q^T(f(\breve{\bf Q}) - {\bf b}).
\end{aligned}
\end{equation}

To compute the Hessian matrix using the term $f(\breve{\bf Q}\frac{{\bf X}^2}{2!})$, we express 
\begin{equation*}
{\bf X}^2 = \left[\begin{array}{ccc}-(x_1^2+x_2^2) & -x_2x_3 & x_1x_3 \\-x_2x_3 & -(x_1^2+x_3^2) & -x_1x_2 \\x_1x_3 & -x_1x_2 & -(x_2^2+x_3^3)\end{array}\right]
\end{equation*}
as a sum of the following six matrices
\begin{equation}
\begin{aligned}
{\bf X}^2 = x_1^2 {\bf T}_{1,1} + x_1x_2 {\bf T}_{1,2} + x_1x_3 {\bf T}_{1,3} \\ + x_2^2 {\bf T}_{2,2} 
+ x_2x_3 {\bf T}_{2,3} + x_3^2{\bf T}_{3,3} 
\end{aligned}
\end{equation}
where we have introduced the matrices
\begin{eqnarray*}
{\bf T}_{1,1} &=& -({\boldsymbol \delta}_1{\boldsymbol \delta}_1^T + {\boldsymbol \delta}_2{\boldsymbol \delta}_2^T),\\
{\bf T}_{1,2} &=& -({\boldsymbol \delta}_3{\boldsymbol \delta}_2^T + {\boldsymbol \delta}_2{\boldsymbol \delta}_3^T),\\
{\bf T}_{1,3} &=& \,\,\, \,\,({\boldsymbol \delta}_3{\boldsymbol \delta}_1^T + {\boldsymbol \delta}_1{\boldsymbol \delta}_3^T),\\
{\bf T}_{2,2} &=& - ({\boldsymbol \delta}_2{\boldsymbol \delta}_1^T + {\boldsymbol \delta}_1{\boldsymbol \delta}_2^T),\\
{\bf T}_{2,3} &=& - ({\boldsymbol \delta}_1{\boldsymbol \delta}_1^T + {\boldsymbol \delta}_3{\boldsymbol \delta}_3^T),\\
\text{and} \quad {\bf T}_{3,3} &=& -({\boldsymbol \delta}_2{\boldsymbol \delta}_2^T + {\boldsymbol \delta}_3{\boldsymbol \delta}_3^T).
\end{eqnarray*}

Now, we can express the Hessian matrix as
\begin{equation}
{\bf H}_Q = \frac{1}{2}\left[\begin{array}{ccc} 2{\bf w}^T{\bf h}_{Q,11} & {\bf w}^T{\bf h}_{Q,21} &  {\bf w}^T{\bf h}_{Q,31} \\ {\bf w}^T{\bf h}_{Q,21} &  2{\bf w}^T{\bf h}_{Q,22} & {\bf w}^T{\bf h}_{Q,32} \\ 
{\bf w}^T{\bf h}_{Q,31} &  {\bf w}^T{\bf h}_{Q,32} &  2{\bf w}^T{\bf h}_{Q,33}\end{array}\right]
\end{equation}
where the residual ${\bf w}= f(\breve{\bf Q}) - {\bf b}$, and ${\bf h}_{Q,ij} = f(\tilde{\bf Q}{\bf T}_{i,j})$ for appropriate values of $i$ and $j$.

Once the descent direction is computed based on  Gauss-Newton's step \eqref{eq:GaussNewton} or Newton's step \eqref{eq:Newton}, the {\it step-length} to move along the surface of $f({\bf Q})$ starting from $f(\breve{\bf Q})$ in the search direction is computed by solving
\begin{equation}
\label{eq:optimalsteplength}
\hat{\gamma} = \min_{\gamma \in (0,1]} \quad {\| f({\bf Q}(\gamma{\bf X}({\bf x}))) - {\bf b}\|}_2^2
\end{equation}
where ${\bf Q}(\gamma{\bf X}({\bf x})) = \breve{\bf Q} \exp(\gamma{\bf X}({\bf x}))$.

\bibliographystyle{IEEEbib}
\bibliography{IEEEabrv,strings,refs}

\end{document}